\newtheorem{atheorem}{Theorem}
\theoremstyle{thmstyleone}%
\newtheorem{theorem}{Theorem}
\newtheorem{proposition}[theorem]{Proposition}%
\theoremstyle{thmstyletwo}%
\theoremstyle{thmstylethree}%
\newtheorem{definition}{Definition}%
\journal{Journal}
\begin{document}

\begin{frontmatter}



\title{The Price of Cognition and Replicator Equations in Parallel Neural Networks}


\author{Armen Bagdasaryan\corref{mycorrespondingauthor}}
\cortext[mycorrespondingauthor]{Corresponding author}
\ead{armen.bagdasaryan@aum.edu.kw}

\author{Antonios Kalampakas}

\author{Mansoor Saburov}

\affiliation{organization={College of Engineering and Technology, American University of the Middle East}, 
	city={Egaila},
	postcode={54200}, 
	country={Kuwait}}

\begin{abstract}
In a broader context, the experimental investigation postulates that toxic neuropeptides/neurotransmitters are causing damage to the functionality of synapses during neurotransmission processes. In this paper, we are aiming to propose a novel mathematical model that studies the dynamics of synaptic damage in terms of concentrations of toxic neuropeptides/neurotransmitters during neurotransmission processes. Our primary objective is to employ Wardrop's first and second principles within a neural network of the brain. In order to comprehensively incorporate Wardrop's first and second principles into the neural network of the brain, we introduce two novel concepts: \textit{neuropeptide's (neurotransmitter's) equilibrium} and \textit{synapses optimum}. The \textit{neuropeptide/neurotransmitter equilibrium} refers to \textit{a distribution of toxic neuropeptides/neurotransmitters that leads to uniform damage across all synaptic links}. Meanwhile, \textit{synapses optimum} is \textit{the most desirable distribution of toxic neuropeptides/neurotransmitters that minimizes the cumulative damage experienced by all synapses}. In the context of a neural network within the brain, an analogue of the price of anarchy is \textit{the price of cognition} which is \textit{the most unfavorable ratio between the overall impairment caused by toxic neuropeptide's (neurotransmitter's) equilibrium in comparison to the optimal state of synapses (synapses optimum)}. To put it differently, \textit{the price of cognition} measures \textit{the loss of cognitive ability resulting from increased concentrations of toxic neuropeptides/neurotransmitters}. Additionally, a replicator equation is proposed within this framework that leads to the establishment of the synapses optimum during the neurotransmission process. It is important to note that our model serves as a high-level simplification and abstraction of the natural neurotransmission process involving interactions between two neurons. Nevertheless, we envision that this mathematically abstract model can serve as a source of motivation to instigate novel experimental, mathematical, and computational research avenues in the field of contemporary neuroscience.
\end{abstract}


\begin{highlights}
\item The connection between two neurons, established through unidirectional synaptic links, is visualized as \textit{a parallel network between two nodes}; 
\item Our innovative approach involves the application of ``\textit{Wardrop's first and second principles}'' in the examination of synaptic damage resulting from toxic neuropeptides/neurotransmitters during neurotransmission processes; 
\item This was achieved by introducing two novel concepts: \textit{neuropeptide's (neurotransmitter's) equilibrium} and \textit{synapses optimum}; 
\item The \textit{neuropeptide/neurotransmitter equilibrium} refers to {a distribution of toxic neuropeptides/neurotransmitters that leads to uniform damage across all synaptic links}; 
\item \textit{Synapses optimum} is the most desirable distribution of toxic neuropeptides/neurotransmitters that minimizes the cumulative damage experienced by all synapses;
\item An analogous concept  to the price of anarchy is \textit{the price of cognition}  that represents \textit{the most unfavorable ratio between the overall impairment caused by toxic neuropeptide's (neurotransmitter's) equilibrium in comparison to the optimal state of synapses (synapses optimum)};  
\item To put it differently, \textit{the price of cognition} measures \textit{the loss of cognitive ability resulting from increased concentrations of toxic neuropeptides/neurotransmitters};  
\item Finally, a replicator equation is proposed within this framework that leads to the establishment of the synapses optimum during the neurotransmission process.
\end{highlights}

\begin{keyword}
neurotransmitter's equilibrium \sep synapses optimum \sep price of cognition \sep Nash equilibrium \sep replicator equation.


\MSC[2020] 92B20 \sep 92C20 \sep 90B20 \sep 91A14 \sep 91A22

\end{keyword}

\end{frontmatter}


\section{Introduction}\label{sec1}

The brain assumes responsibility for each cognitive process such as thou\-ghts, emotions, and physical actions. This role is facilitated through a fundamental physiological mechanism known as \textit{neurotransmission}. In essence, \textit{neurotransmission} is the intricate process through which brain cells establish communication channels. Predominantly, this intricate exchange of information transpires at a specialized inter-cellular junction known as the \textit{synapse}. It is now well-established within the domain of contemporary neuroscience that the synapse plays a critical role in a variety of cognitive processes, especially those involved with deep learning and memory. The term ``\textit{synapse}'' finds its etymological origins in the Greek words ``\textit{syn}'' denoting ``\textit{together},'' and ``\textit{haptein}'' signifying to  ``\textit{fasten}.'' The \textit{synapse} is the small pocket of space situated between two neurons which is essential to the transmission of electric nerve impulse from one neuron to another. In the context of a chemical synapse, the membrane of the transmitting neuron, often referred to as the \textit{presynaptic} neuron, closely approximates the membrane of the recipient neuron, known as the \textit{postsynaptic} neuron. In both the presynaptic and postsynaptic neurons, there exist comprehensive arrays of molecular machinery that \textit{link} the two membranes together and carry out the signal transmission process. A single  neuron has the capacity to house thousands of \textit{synaptic links}. The linkage space between a presynaptic fiber and a postsynaptic fiber is called the \textit{synaptic cleft}. The presynaptic neuron releases neurotransmitters from synaptic vesicles into the synaptic cleft. Then those neurotransmitters will be taken up by the membrane of receptors on the postsynaptic neuron throughout the \textit{unidirectional synaptic links}. In essence, neurotransmitters are essential neurochemicals that maintain synaptic and cognitive functions in humans by sending signals across presynaptic to postsynaptic neurons throughout the \textit{unidirectional synaptic links}. The reader can refer to the monograph (see \cite{Della2021}) for the detailed discussion of elements of contemporary neuroscience.

In the past, neuroscientists held the belief that all synapses remained \textit{constant} in their functionality, operating consistently at a \textit{uniform} level. However, contemporary understanding has evolved to acknowledge (see \cite{FWT2015}) that synaptic \textit{strength} can be modified by activity or inactivity, leading to strengthening or weakening of synapses or even \textit{damaging} the functionality of synapses in the brain. The augmentation of synaptic strength occurs in direct proportion to its usage, 
thereby enhancing its capacity to exert a more substantial influence over its adjacent postsynaptic neurons. Contemporary neuroscientists believe (see \cite{Della2021,FWT2015}) that this strengthening of synapses constitutes 
a fundamental mechanism for facilitating the process of learning and, 
as a result, the formation of memories.

Gradually, individuals afflicted with \textit{Alzheimer's disease} undergo a profound and relentless decline in memory and cognitive abilities. These transformations stem from the gradual \textit{deterioration} and \textit{dysfunction} of \textit{synapses} which are responsible for the encoding, retention, and processing of information. In 2017, the issue 4 of volume 57 of the \textit{Journal of Alzheimer's Disease} was specially dedicated to critically assess the current status of synapses and neurotransmitters in Alzheimer’s disease. Experts in the domains of synapses and neurotransmitters of Alzheimer's disease provide a comprehensive overview of the present state of fundamental biology of synapses and neurotransmitters (see \cite{Reddy2017}). Additionally, they offer an update on the current progress of clinical trials involving neurotransmitters for the treatment of Alzheimer's disease. There is a growing body of evidence (see \cite{Reddy2017} and the references provided therein) indicating that impairments in the cortex and hippocampus at the initial stages of Alzheimer's disease are linked to synaptic \textit{damage} and \textit{deterioration} which are attributed to the presence of oligomeric forms of the neurotoxic amyloid-$\beta$ (A$\beta$) peptide acting as neurotransmitter. According to 2016 statistics from the Alzheimer’s Association (see \cite{RajmohanReddy2017,JJKSSAK2017,WR2017,KR2017,GTD2017,CT2017,TT2017,JB2017,GS2017,ECMGGB2017}), recent molecular, cellular, animal model, and postmortem brain studies have revealed that amyloid-$\beta$ (A$\beta$) is a key factor that causes synaptic damage and cognitive decline in patients with Alzheimer’s disease. In this context, it has been postulated (see  \cite{hely1, RajmohanReddy2017,JJKSSAK2017,WR2017,KR2017,GTD2017,CT2017,TT2017,JB2017,GS2017,ECMGGB2017}) that elevated levels of amyloid-$\beta$ (A$\beta$) may induce early \textit{defects} and \textit{damage} in synaptic function. This has been proposed as a potential contributing factor to the manifestation of neuropsychiatric symptoms commonly observed during the prodromal phase of Alzheimer's disease. The reader may refer to the special issue 57(4):(2017) of the \textit{Journal of Alzheimer's Disease} for a comprehensive report of the present state of fundamental biology of synapses and neurotransmitters in the field of Alzheimer’s disease (see also \cite{hely2, CWHDZZ2024,PHLPS2023,HYKDPP2022,KK2014,LKP2023,HKV2023}).

Due to the importance of knowing \textit{damage} and \textit{deterioration} of synapses in the brains of people with Alzheimer’s disease, in this paper, we are aiming to propose a novel mathematical model that \textit{studies the dynamics of synaptic damage in terms of concentrations of toxic neuropeptides/neurotransmitters during neurotransmission processes}.  Our objective is to employ ``\textit{Wardrop's first and second principles}'' in order to \textit{enhance our understanding of the dynamics of toxic neuropeptides/neurotransmitters during the neurotransmission process within a neural network of the brain}. We first introduce the so-called \textit{neuropeptide's (neurotransmitter's) equilibrium} concept that encapsulates the fundamental essence of Wardrop's first principle. Namely, the \textit{neuropeptide/neurotransmitter equilibrium} refers to \textit{a distribution of toxic neuropeptides/neurotransmitters that leads to uniform damage across all synaptic links}. Subsequently, we proceed to present a complete manifestation of Wardrop's second principle through the introduction of another so-called \textit{synapses optimum} concept. \textit{Synapses optimum} is \textit{the most desirable distribution of toxic neuropeptides/neurotransmitters that minimizes the cumulative damage experienced by all synapses}.
Consequently, within the context of a neural network of the brain, an analogue of the price of anarchy is \textit{the price of cognition} which is \textit{the most unfavorable ratio between the overall impairment caused by toxic neuropeptide's (neurotransmitter's) equilibrium in comparison to the optimal state of synapses (synapses optimum)}. To put it differently, \textit{the price of cognition} measures \textit{the loss of cognitive ability resulting from increased concentrations of toxic neuropeptides/neurotransmitters}. Finally, a replicator equation is proposed within this framework that leads to the establishment of the synapses optimum during the neurotransmission process.  Our model represents a sophisticated simplification and abstraction  of the natural neurotransmission process involving two neurons. Nevertheless, we envision that this mathematically abstract model can serve as a catalyst and a source of motivation to instigate novel experimental, mathematical, and computational research avenues in the field of contemporary neuroscience.

\section{The Price of Cognition Through Wardrop's Equilibria Approaches}\label{sec2}

A \textit{brain neural network} consists of a collection of neurons that are \textit{chemically interconnected} or \textit{functionally associated}. Within the brain neural network, \textit{synapses} play a vital role in facilitating the transmission of nervous impulses from one neuron to another. The fundamental types of connections between neurons include both \textit{chemical} and \textit{electrical synapses}. An \textit{electrical} \textit{synapse}'s primary advantage lies in its ability to swiftly transmit signals from one neuron to the adjacent neuron. In the context of \textit{chemical synapses}, it's important to note that there is a temporal delay in the neurotransmission process. Therefore, a part of the brain neural network that relies on chemical synapses for communication can be regarded as a \textit{transportation network}. Within this framework, the \textit{impairment}, \textit{deterioration}, \textit{degradation}, or \textit{decline} of the functionality of synapses, occurring due to elevated levels of toxic neuropeptides/neurotransmitters during the neurotransmission process, can be linked to a \textit{traffic congestion} problem in a transportation network. From a theoretical standpoint, when investigating the dynamic relationship between synaptic damage and the concentration of toxic neuropeptides/neurotransmitters during the neurotransmission process, it is plausible to adapt ``\textit{Wardrop's first and second principles}''. Although these principles were originally formulated to describe \textit{optimal flow} distributions in transportation networks, we believe that they can be adapted and applied effectively within the framework of the neural network in the brain. This is a novel contribution of this paper.

The network optimization problems have attracted much attention over the last decade for their ubiquitous appearance in real-life applications and the inherent mathematical challenges that they present, especially, in optimal transportation theory and communication networks (see \cite{W1952,Beckmann1956,daf-disser,DS1969,Patriksson,AcemogluOzdaglar,OzdaglarSrikant}). Back in 1952, J.G.Wardrop (see \cite{W1952}) formulated two principles of optimality of flows in networks that describe the circumstances of the \textit{user equilibrium} and the \textit{system optimum.} The first Wardrop principle states 
that \textit{the costs of all utilized links are equal and less than the costs of those unutilized links for every fixed source-destination pair}. Meanwhile, the \textit{system optimum} is \textit{the optimal distribution of the flow for which the total cost of the system is minimal}.   
The problems of finding the user equilibrium and system optimum are the topics of active research both in theory and practice. When making route choices in traffic networks, the network users frequently display selfish behavior, that is the fundamental principle of the first Wardrop principle, by selecting routes that minimize their individual travel costs. It is widely recognized (see \cite{Beckmann1956,daf-disser,DS1969,Patriksson}) that selfish routing, in general, does not lead to a system optimum of the network that minimizes the total travel cost. The so-called \textit{Price of Anarchy} (see \cite{KP1999,KP2009}) is a quantitative measure of the inefficiency of the traffic network that was caused by the selfish behavior of the network users. Namely, \textit{the price of anarchy} is \textit{the the worst-possible ratio between the total cost of a user equilibrium in comparison to the system optimum.}

The primary objective of this paper is to gain insight into specific mechanisms and phenomena within the neural network of the brain by applying methods originally designed for solving traffic assignment problems in transportation networks. For the sake of simplicity, we focus on examining the process of signal transmission between two neurons through chemical synapses.

Broadly speaking, there exist comprehensive \textit{arrays} (\textit{links}) of \textit{synapses} between presynaptic and postsynaptic neurons that carry out the signal transmission process. In this scenario, let's assume that the presynaptic neuron releases \textit{one-unit} of toxic \textit{neuropeptides}/\textit{neurotransmitters}. These toxic \textit{neuropeptides}/\textit{neurotransmitters} are subsequently absorbed by receptors located on the postsynaptic neuron across \textit{unidirectional synaptic links}. We can visualize \textit{two neurons connected by unidirectional synaptic links} as \textit{a parallel network between two nodes}.
Let $\mathbf{I}_m=\{1,2,\dots,m\}$ denote the set of the synaptic links between two neurons. We denote by $x_k\geq 0$, the concentration of toxic \textit{neuropeptides}/\textit{neurotransmitters} passing through the synaptic link $k$  and $\mathbf{x}=(x_1,x_2,\ldots,x_m)$ denote a distribution of toxic \textit{neuropeptides}/\textit{neurotransmitters} over all synaptic links $\mathbf{I}_m=\{1,2,\dots,m\}$, where $\sum_{i=1}^m  x_i=1$.  

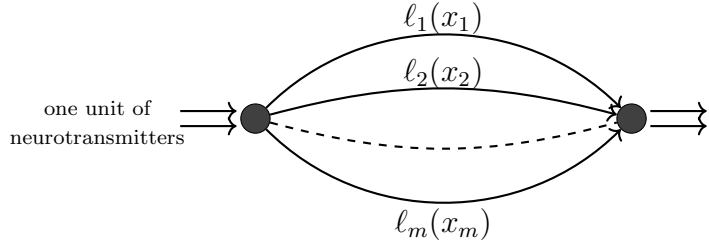
\begin{figure}\label{fig1}
	\begin{center}
		\usetikzlibrary {graphs} 
		\begin{tikzpicture}
			\node [circle,draw,fill=black!75, minimum size=0.33cm] (U1) at (1,2) {};
			\node (L1) at (3.5,3.34) {$\ell_1(x_1)$};
			\node [circle,draw,fill=black!75, minimum size=0.33cm] (U2) at (6,2) {};
			\node (L2) at (3.5,2.6) {$\ell_2(x_2)$};
			\node (L3) at (3.5,0.6) {$\ell_m(x_m)$};
			\node (L4) at (-1.1,2.15) {\scriptsize{one unit of}};
			\node (L4) at (-1.1,1.75) {\scriptsize{neurotransmitters}};
			\Edge[style={->, bend right=45}](U1)(U2);
			\draw[thick, ->] (0,2.1) to (0.75,2.1);
			\draw[thick, ->] (0,1.9) to (0.75,1.9);
			\draw[thick, ->] (6.25,2.1) to (7,2.1);
			\draw[thick, ->] (6.25,1.9) to (7,1.9);
			\Edge[style={-, dashed, bend right=15}](U1)(U2);
			\Edge[style={->, bend left=15}](U1)(U2);
			\Edge[style={->, bend left=45}](U1)(U2)
		\end{tikzpicture}
		\caption{The model of neurotransmission process between two neurons}
	\end{center}
\end{figure}

Within the framework of the neurotransmission process, increased levels of toxic neuropeptides/neurotransmitters damage the synaptic links. To investigate this correlation, we make the assumption that each synaptic link $k\in \mathbf{I}_m$ is associated with a \textit{synaptic damage function} $
\ell_k :[0,\infty) \to [0,\infty )$ dependent on the concentration of toxic neuropeptides/neurotransmitters that measures the damage on the synaptic link $k$. Throughout this paper, for each $k\in\mathbf{I}_m$ the synaptic damage function $\ell_k$ will be assumed to be a convex, strictly increasing, continuously differentiable function and $\ell_k(0)=0$. Let $\mathbf{L}_m(\mathbf{x})=\left(\ell_1(x_1),\ldots, \ell_m(x_m)\right)$ denote a \textit{synaptic damage vector function} at  $\mathbf{x}=(x_1,\ldots,x_m)$. Hence, a synaptic damage between two neurons can be identified with the synaptic damage vector function   $\mathbf{L}_m(\cdot)=(\ell_1(\cdot),\cdots , \ell_m(\cdot))$. In this paper, we will  interchangeably use a synaptic damage vector function and the neural network involving two neurons. We denote by $\mathbb{S}^{m-1}=\{\mathbf{x}\in\mathbb{R}_+^m:\sum_{i=1}^m  x_i =1\}$ the standard simplex. Define $\textup{\bf{supp}}(\mathbf{x}):=\{i\in \mathbf{I}_m:x_i\neq 0\}$ and  $\mathrm{int}\mathbb{S}^{m-1}=\{x\in\mathbb{S}^{m-1}:\textup{\bf{supp}}(\mathbf{x})=\mathbf{I}_m\}$.

A distribution $\mathbf{x}=(x_1,\dots , x_m)\in\mathbb{S}^{m-1}$ is called  a \textit{neuropeptide/neurotransmitter equilibrium} if the synaptic damage is the same across all synaptic links, i.e., 
$\ell_i(x_i)=\ell_j(x_j)$ for all $i,j\in \mathbf{I}_m$. The \textit{cumulative damage} experienced by synapses at a distribution $\mathbf{x}=(x_1,\dots , x_m)\in\mathbb{S}^{m-1}$ is defined by the sum $\sum_{i=1}^m  x_i \ell_i(x_i)$. Obviously, the \textit{neuropeptide/neurotransmitter equilibrium} that results in uniform damage across all synaptic links does not generally minimize the cumulative damage experienced by all synapses. The primary objective of experts in the domains of synapses and neurotransmitters of Alzheimer's disease is \textit{to reduce and minimize  the cumulative damage experienced by synapses in the brains of individuals afflicted with Alzheimer's disease}. Therefore, a distribution $\mathbf{x}=(x_1,\dots ,x_m)\in \mathbb{S}^{m-1}$ is called the \textit{optimal state of synapses} (\textit{synapses optimum}) if it minimizes the cumulative damage $\sum_{i=1}^m  x_i \ell_i(x_i)$ experienced by all synapses.

It has been well-established (see \cite{Beckmann1956,DS1969,Patriksson}) that if for each $k\in\mathbf{I}_m$ a synaptic damage function $\ell_k :[0,\infty) \to [0,\infty )$ with  $\ell_k(0)=0$ is a convex, strictly increasing, continuously differentiable function then there always exists a unique neuropeptide/neurotransmitter equilibrium $\mathbf{x}^{(\mathbf{ne})}\in\mathrm{int}\mathbb{S}^{m-1}$, as well as a unique synapses optimum $\mathbf{x}^{(\mathbf{so})}\in\mathrm{int}\mathbb{S}^{m-1}$. In general, we have that $\mathbf{x}^{(\mathbf{ne})}\neq\mathbf{x}^{(\mathbf{so})}$.

Consequently, within the context of a neural network of the brain, an analogous concept  to \textit{the price of anarchy} can be called as \textit{the price of cognition}. This represents \textit{the most unfavorable ratio between the overall impairment caused by toxic neuropeptide's (neurotransmitter's) equilibrium in comparison to the optimal state of synapses (synapses optimum)}. To put it differently, \textit{the price of cognition} measures \textit{the loss of cognitive ability resulting from increased concentrations of toxic neuropeptides/neurotransmitters}. In terms of the mathematical formula, \textit{the price of cognition} is defined as follows
$$
\mathbf{PoC}({\mathbf{L}_m}):=\frac{\mathbf{CD}(\mathbf{x}^{(\mathbf{ne})})}{\mathbf{CD}(\mathbf{x}^{(\mathbf{so})})}
$$ 
where $\mathbf{x}^{(\mathbf{ne})}\in\mathrm{int}\mathbb{S}^{m-1}$ is the unique neuropeptide/neurotransmitter equilibrium, $\mathbf{x}^{(\mathbf{so})}\in\mathrm{int}\mathbb{S}^{m-1}$ is the unique synapses optimum, and $\mathbf{CD}(\mathbf{x})=\sum_{i=1}^m  x_i \ell_i(x_i)$ is the cumulative damage experienced by all synapses at a distribution  $\mathbf{x}\in\mathbb{S}^{m-1}$. 

Obviously, we always have that $\mathbf{PoC}({\mathbf{L}_m})\geq 1$. The case $\mathbf{PoC}({\mathbf{L}_m})=1$ represents the most favorable and desirable circumstance for individuals afflicted with Alzheimer's disease. Indeed, if  $\mathbf{PoC}({\mathbf{L}_m})=1$ then we must have that $\mathbf{x}^{(\mathbf{ne})}=\mathbf{x}^{(\mathbf{so})}$ due to the uniqueness of the {neuropeptide/neurotransmitter equilibrium} as well as synapses optimum. This means that \textit{the  neuropeptide/neurotransmitter equilibrium that leads to uniform damage across all synaptic links is also the optimal state of synapses which minimizes the cumulative damage experiences by them.} Under such circumstances, when two equilibria coincide $\mathbf{x}^{(\mathbf{ne})}=\mathbf{x}^{(\mathbf{so})}$ then  it is referred to as  \textit{the optimal state of neurotransmitters-synapses (\textit{neurotransmitters-synapses optimum})}. 

Within the domain of transportation networks, the significance of achieving a \textit{price of anarchy} of $1$ in a transportation network was emphasized by S.~Dafermos as early as 1968, even before the concept of \textit{the price of anarchy} itself was formally introduced. Namely, the problem proposed by S.~Dafermos in her Ph.D. thesis \cite{daf-disser} (see also \cite{DS1969}) seeks to identify the specific class of cost functions for a given network that lead to the emergence of identical equilibria according to Wardrop's first and second principles. According to S.~Dafermos, ``\textit{such networks are extremely desirable because the pattern created by the individuals acting in their own self interests coincides with the pattern most economical for the total society} (see \cite{daf-disser,DS1969}).  She also provided a solution to her problem by specifying a family of cost functions in the form of monomial polynomials $\ell_i(x_i)=a_ix_i^k$ where $a_i>0$ and $k\in\mathbb{N}$. This family of cost functions is capable of solving the problem for any given network (see \cite{daf-disser,DS1969}). 

Recent empirical studies in real-world networks (see \cite{price}), as well as theoretical studies (see \cite{self1,self2}), show that the actual value of the price of anarchy is very close to $1$.
The price of anarchy in proximity to $1$ indicates that \textit{the user equilibrium is approximately socially optimal, thereby implying that the effects of selfish behavior are relatively benign}.
Moreover, theoretical studies on real-world networks also indicate (see \cite{self1,self2}) that \textit{the price of anarchy is approximately $1$ in both light and heavy traffic conditions}. In addition, the price of anarchy for a broad range of cost functions, which includes all polynomials, tends to approach $1$ in both heavy and light traffic, regardless of the network topology and the number of origin and destination pairs within the network. Particularly, \textit{the comprehensive asymptotic principle}  for polynomial cost functions states (see \cite{self1,self2}) that in networks where costs are polynomial, \textit{the price of anarchy converges to $1$ under both light and heavy traffic}. This implies that in light and heavy traffic congestion conditions, \textit{a benevolent social planner with complete authority over traffic assignment would not do any better than the selfish behavior of users}. 

On the other hand, parallel networks hold a unique status among other network types. Namely, \textit{parallel networks are  nontrivial classes of network topologies for which the price of anarchy is smaller than for any other networks} (see \cite{RT2003}). Among many other important results, one of the main results of the paper (see \cite{RT2003}) state that under weak assumptions on the class of cost functions, the price of anarchy for any multicommodity flow network is achieved  by a \textit{single-commodity instance} within a \textit{network of parallel links}. Consequently, for a given class of cost functions, the price of anarchy cannot be improved by any nontrivial constraint on the class of network topologies and/or the number of commodities. Therefore, \textit{the lower value of the price of anarchy can always be attained by the simplest parallel network} (see \cite{RT2003}). 

Despite its profound importance and practical applicability (see \cite{price,self1,self2}), Dafermos's problem did not receive adequate attention from experts until more recently. All of these theoretical discoveries and empirical observations (see \cite{self1,self2,price}) served as the driving force behind the exploration of the  central and fundamental questions addressed in our recent studies(see \cite{lect,ccis-dyn,ccis-stat}): 

\begin{itemize}
	\item Can these observations be justified theoretically?
	\item Can we theoretically describe all cost functions associated with parallel networks in which the price of anarchy achieves its least value of 1?
\end{itemize}

Finally, Dafermos's problem has been fully solved for parallel networks in the papers (see \cite{lect,ccis-dyn,ccis-stat}). For any prior given positive distribution $\mathbf{p}=(p_1,\ldots,p_m)\in\mathrm{int}\mathbb{S}^{m-1}$ and for any convex, strictly increasing, continuously differentiable function $\ell:[0,\infty) \to [0,\infty)$ with $\ell(0)=0$, we define the following  cost vector function
$$\mathbf{\Phi}_m(\mathbf{x})=\left(\ell\left(\frac{x_1}{p_1}\right),\ldots, \ell\left(\frac{x_m}{p_m}\right)\right).$$ 

Among many other interesting results (see \cite{lect,ccis-dyn,ccis-stat}), one of the main results  states that the distribution $\mathbf{p}=(p_1,\ldots,p_m)\in\mathrm{int}\mathbb{S}^{m-1}$ is the  \textit{user equilibrium} as well as the \textit{system optimum} of the parallel network $\mathbf{\Phi}_m$. In this case, the \textit{price of anarchy} is equal to its least value that is $1$. This shows that there are a vast class of parallel networks for which the \textit{price of anarchy} is always equal to its least value that is $1$. Unlike Dafermos's example, there are also another interesting classes of parallel networks $\mathbf{L}_m(\mathbf{x})=\left(\ell_1(x_1),\ldots, \ell_m(x_m)\right)$ generated by different degrees of polynomials for which the \textit{price of anarchy} is always equal to $1$, where
\begin{align*}
	& {\ell}_1(x_1)=\frac{x_1}{p_1}, \\
	& {\ell}_2(x_2) = \frac{a_{22}}{2}\Big(\frac{x_2}{p_2}\Big)^2 + \frac{a_{21}}{1}\Big(\frac{x_2}{p_2}\Big) + \frac{a_{22}}{2}, \\
	& {\ell}_3(x_3) = \frac{a_{33}}{3}\Big(\frac{x_3}{p_3}\Big)^3 + \frac{a_{32}}{2}\Big(\frac{x_3}{p_3}\Big)^2 + \frac{a_{31}}{1}\Big(\frac{x_3}{p_3}\Big) + \frac{2a_{33}}{3} + \frac{a_{32}}{2}\\
	&\vdots \\
	&\ell_i(x_i) = \sum_{k=1}^i \frac{a_{ik}}{k}\Big(\frac{x_i}{p_i}\Big)^k + \sum_{k=2}^i \frac{k-1}{k}a_{ik} \\
	& \vdots \\
	& {\ell}_m(x_m) = \sum_{k=1}^m \frac{a_{mk}}{k}\Big(\frac{x_m}{p_m}\Big)^k + \sum_{k=2}^m \frac{k-1}{k}a_{mk}.
\end{align*} 
and 
\[
\sum_{k=1}^{i}a_{ik}=1 \quad \textup{and} \quad  a_{ik}>0 \quad \textup{for all} \quad 1\leq i\leq m, \ 1\leq k\leq i.
\]

\section{Replicator Equations in Parallel Neural Networks}

Let us consider the following parallel neural network
$$\mathbf{\Phi}_m(\mathbf{x})=\left(\ell\left(\frac{x_1}{p_1}\right),\ldots, \ell\left(\frac{x_m}{p_m}\right)\right).$$ 
where $\ell:[0,\infty) \to [0,\infty)$ is a convex, strictly increasing, continuously differentiable function such that $\ell(0)=0$. In this case, as we already mentioned that $\mathbf{p}=(p_1,\ldots,p_m)\in\mathrm{int}\mathbb{S}^{m-1}$ is  the \textit{neurotransmitters-synapses optimum}, i.e., it is simultaneously  the \textit{neuropeptide/neurotransmitter equilibrium} as well as the \textit{synapses optimum} of the parallel network $\mathbf{\Phi}_m$.

Let $\langle\mathbf{x},\mathbf{\Phi}_m(\frac{\mathbf{x}}{\mathbf{p}})\rangle=\sum_{k=1}^mx_k{\ell}\left(\frac{x_k}{p_k}\right)$ be the cumulative damage experienced by all synapses of the parallel neural network $\mathbf{\Phi}_m$ at the distribution $\mathbf{x}\in\mathbb{S}^{m-1}$. 

We assume that the parallel neural network may \textit{dynamically} change the distribution of toxic neuropeptides/neurotransmitters over alternative parallel synaptic links. Namely, if $\mathbf{x}^{(n)}=(x_1^{(n)},\ldots,x_m^{(n)})\in\mathbb{S}^{m-1}$ is the distribution at the step~$n$ then the relative growth rate $\frac{x^{(n+1)}_k-x^{(n)}_k}{x^{(n)}_k}$ of the distribution on the synaptic link $k$ at the step~$(n+1)$ is negatively proportional (with the propositional coefficient $\varepsilon$) to the difference between the damage ${\ell}(\frac{x^{(n)}_k}{p_k})$ of the synaptic link~$k$ at the step~$n$ and the cumulative damage  $\langle\mathbf{x},\mathbf{\Phi}_m(\frac{\mathbf{x}}{\mathbf{p}})\rangle=\sum_{k=1}^mx_k{\ell}\left(\frac{x_k}{p_k}\right)$ of the parallel neural network as a whole at the step~$n$. The key idea is that the distribution of the synaptic link whose damage is smaller (larger) than the cumulative damage of the neural network will increase (decrease) over the neurotransmission process.

We now propose a novel mathematical model of distributions of toxic neuropeptides/neurotransmitters by a discrete-time replicator equation $\mathcal{R}:\mathbb{S}^{m-1}\to\mathbb{S}^{m-1}$ 
\begin{equation}\label{REdefinedbyRM}
	\left(\mathcal{R}(\mathbf{x})\right)_k=x_k\left[1+\varepsilon\left(\ell\left(\frac{x_k}{p_k}\right)-\sum\limits_{i=1}^mx_i\ell\left(\frac{x_i}{p_i}\right)\right)\right], \quad \forall \ k\in\mathbf{I}_m
\end{equation}
where $\ell:[0,\bar{p}]\to[0,+\infty)$ is a convex, continuously differentiable, and strictly increasing function, $\mathbf{p}=(p_1,\ldots,p_m)\in\textup{int}\mathbb{S}^{m-1}$ is a positive distribution, $\bar{p}:=\frac{1}{\min\limits_{i\in\mathbf{I}_m}{p_i}}$, and $\varepsilon\in(-1,0)$. 
Some particular cases of equation \eqref{REdefinedbyRM} were studied in the literature \cite{Saburov2021,Saburov2022}.

\begin{definition}[Nash Equilibrium]\label{NE}
	A distribution $\mathbf{x}\in \mathbb{S}^{m-1}$ is called a \textit{Nash equilibrium} of the replicator equation given by \eqref{REdefinedbyRM} if one has $\langle\mathbf{x},\varepsilon\mathbf{\Phi}_m(\frac{\mathbf{x}}{\mathbf{p}})\rangle\geq \langle\mathbf{y},\varepsilon\mathbf{\Phi}_m(\frac{\mathbf{x}}{\mathbf{p}})\rangle$ for any $\mathbf{y}\in\mathbb{S}^{m-1}$. A flow $\mathbf{x}$ is called a \textit{strictly Nash equilibrium} if one has $\langle\mathbf{x},\varepsilon\mathbf{\Phi}_m(\frac{\mathbf{x}}{\mathbf{p}})\rangle> \langle\mathbf{y},\varepsilon\mathbf{\Phi}_m(\frac{\mathbf{x}}{\mathbf{p}})\rangle$ for any $\mathbf{y}\in\mathbb{S}^{m-1}$ with $\mathbf{y}\neq\mathbf{x}$. 
\end{definition}

We define 
$$
\mu:=\max\limits_{z\in[0,\bar{p}]}\frac{d}{dz}\left(z\ell(z)\right), \qquad \mathbf{p}_\alpha:=\frac{1}{s_\alpha(\mathbf{p})}\sum_{i\in\alpha}p_i\mathbf{e}_i
$$ 
for all $\alpha\subset\mathbf{I}_m$,  where $s_\alpha(\mathbf{p})=\sum_{i\in\alpha}p_i$, $\mathbf{p}=(p_1,\ldots,p_m)\in\textup{int}\mathbb{S}^{m-1}$, and $\mathbf{e}_i$ is the vertex of the simplex ${\mathbb{S}}^{m-1}$ for all $i\in\mathbf{I}_m$. Denote $\textbf{Fix}(\mathcal{R})=\{\mathbf{x}\in\mathbb{S}^{m-1}: \mathcal{R}(\mathbf{x})=\mathbf{x}\}$ a set of \textit{fixed}  points of the replicator equation \eqref{REdefinedbyRM}.

The next result  is the main result of this paper that describes the dynamics of distributions of toxic neuropeptides/neurotransmitters during the neurotransmission process for sufficiently small $\varepsilon\in(-1,0)$.

\begin{atheorem}\label{DynamicsofWOF}
	Let $\varepsilon\in(-\frac{1}{\mu},0)\cap(-1,0)$. Then the following statements hold true: 
	\begin{description}\setlength\parskip{1pt}
		\item[$(i)$] One has $\textbf{\textup{Fix}}(\mathcal{R})=\bigcup\limits_{\alpha\subset\mathbf{I}_m}\{\mathbf{p}_\alpha\};$ 
		\item[$(ii)$] The unique neurotransmitters-synapses optimum $\mathbf{p}$ is the only Nash equilibrium;
		\item[$(iii)$] The unique neurotransmitters-synapses optimum $\mathbf{p}$ is the only stable fixed point; 
		\item[$(iv)$] An interior orbit converges to the unique neurotransmitters-synapses optimum $\mathbf{p}$.
	\end{description}
\end{atheorem}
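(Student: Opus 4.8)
The plan is to exploit the variational structure of the parallel network $\mathbf{\Phi}_m$: because $\mathbf{p}$ is simultaneously the neurotransmitter equilibrium and the synapses optimum, the map $\mathbf{x}\mapsto \langle\mathbf{x},\mathbf{\Phi}_m(\frac{\mathbf{x}}{\mathbf{p}})\rangle=\sum_k x_k\ell(x_k/p_k)$ should act as a strict Lyapunov function for $\mathcal{R}$ (up to the sign of $\varepsilon$), and the restriction $\varepsilon\in(-\frac1\mu,0)$ is exactly what keeps $\mathcal{R}$ a well-defined map of $\mathbb{S}^{m-1}$ into itself with the cumulative-damage functional monotone along orbits. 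I would organize the proof in the four stated parts, in order, reusing each part in the next.

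For part $(i)$, I would write $\mathcal{R}(\mathbf{x})=\mathbf{x}$ coordinatewise: either $x_k=0$, or $\ell(x_k/p_k)=\sum_i x_i\ell(x_i/p_i)=:c$. On the support $\alpha=\textbf{supp}(\mathbf{x})$ every nonzero coordinate satisfies $\ell(x_k/p_k)=c$, and since $\ell$ is strictly increasing hence injective, $x_k/p_k$ is the same constant for all $k\in\alpha$; together with $\sum_{k\in\alpha}x_k=1$ this forces $x_k=p_k/s_\alpha(\mathbf{p})$, i.e. $\mathbf{x}=\mathbf{p}_\alpha$. Conversely each $\mathbf{p}_\alpha$ is visibly fixed. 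This gives $\textbf{Fix}(\mathcal{R})=\bigcup_{\alpha\subset\mathbf{I}_m}\{\mathbf{p}_\alpha\}$; note the restriction on $\varepsilon$ is not even needed here, only $\varepsilon\neq 0$.

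For part $(ii)$, the Nash condition $\langle\mathbf{x},\varepsilon\mathbf{\Phi}_m(\frac{\mathbf{x}}{\mathbf{p}})\rangle\ge\langle\mathbf{y},\varepsilon\mathbf{\Phi}_m(\frac{\mathbf{x}}{\mathbf{p}})\rangle$ for all $\mathbf{y}$, with $\varepsilon<0$, says $\mathbf{x}$ minimizes $\mathbf{y}\mapsto\langle\mathbf{y},\mathbf{\Phi}_m(\frac{\mathbf{x}}{\mathbf{p}})\rangle=\sum_i y_i\ell(x_i/p_i)$ over $\mathbb{S}^{m-1}$, i.e. $\ell(x_i/p_i)$ is minimal (equal to the common value) on $\textbf{supp}(\mathbf{x})$ and no smaller off it. For a genuinely proper subset $\alpha$ one has $\mathbf{p}_\alpha$ putting a coordinate $x_i/p_i=1/s_\alpha(\mathbf{p})>1$ on $\alpha$ while off-support $x_j/p_j=0$, so $\ell(0)=0<\ell(1/s_\alpha(\mathbf{p}))$ violates the inequality; hence only $\alpha=\mathbf{I}_m$, i.e. $\mathbf{x}=\mathbf{p}$, survives, and at $\mathbf{p}$ all coordinates give $\ell(1)$, so the Nash inequality holds with equality — $\mathbf{p}$ is the unique Nash equilibrium. (One should double-check whether non-fixed points could be Nash; the support-wise characterization above applies to any $\mathbf{x}$, and equality of $\ell(x_i/p_i)$ on the support again forces $\mathbf{x}=\mathbf{p}_\alpha$, so no.)

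Parts $(iii)$ and $(iv)$ are where the work concentrates, and the hard part will be establishing strict monotonicity of $\mathbf{CD}(\mathbf{x})=\sum_i x_i\ell(x_i/p_i)$ along non-fixed orbits. Writing $g(z):=z\ell(z)$, convexity and monotonicity of $\ell$ make $g$ convex on $[0,\bar p]$ with $g'$ bounded by $\mu$; I would compute $\mathbf{CD}(\mathcal{R}(\mathbf{x}))-\mathbf{CD}(\mathbf{x})$ and, using the mean-value/convexity estimate $g(b)-g(a)\le g'(\xi)(b-a)$ with $|g'|\le\mu$ together with $|\varepsilon|\mu<1$, bound it by a negative multiple of $\varepsilon\sum_i x_i(\ell(x_i/p_i)-\mathbf{CD}(\mathbf{x}))^2=\varepsilon\,\mathrm{Var}_{\mathbf x}(\ell(x/p))$, which is $\le 0$ since $\varepsilon<0$, with equality iff the $\ell(x_i/p_i)$ are constant on the support, i.e. iff $\mathbf{x}\in\textbf{Fix}(\mathcal{R})$. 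This makes $\mathbf{CD}$ a strict Lyapunov function. Then: for $(iii)$, at any $\mathbf{p}_\alpha$ with $\alpha\subsetneq\mathbf{I}_m$ a small perturbation adding mass on an omitted link $j$ strictly decreases $\mathbf{CD}$ (since $\ell(0)=0$ is below the common boundary value), so that link's share grows and the orbit leaves every neighborhood — $\mathbf{p}_\alpha$ is unstable; at $\mathbf{p}$, the Lyapunov function has a strict local (indeed global, by convexity of $\mathbf{CD}$ on the simplex, which I would verify from convexity of each $g$) minimum, giving Lyapunov stability. For $(iv)$, an interior orbit stays in a compact level set of $\mathbf{CD}$, $\mathbf{CD}$ is nonincreasing and strictly decreasing off fixed points, so by the discrete LaSalle invariance principle the $\omega$-limit set is a connected subset of $\textbf{Fix}(\mathcal{R})$; since the fixed points are isolated the orbit converges to a single $\mathbf{p}_\alpha$, and interiority plus instability of the boundary fixed points (no interior orbit can accumulate on a repelling $\mathbf{p}_\alpha$, $\alpha\subsetneq\mathbf I_m$, because near such a point $\mathbf{CD}$ would have to increase along the orbit to approach it — more carefully, the omitted coordinates' relative growth factors $1+\varepsilon(\ell(0)-\mathbf{CD})=1+\varepsilon(-\mathbf{CD})>1$ are bounded away from $1$ near $\mathbf{p}_\alpha$, so those coordinates cannot tend to $0$) forces the limit to be $\mathbf{p}$. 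The delicate points to get right are the precise constant in the Lyapunov estimate (ensuring $|\varepsilon|\mu<1$ is exactly the right threshold) and the argument that interior orbits avoid the boundary in the limit; the convexity of $g$ and the explicit product form $x_i\ell(x_i/p_i)$ are what make both go through.
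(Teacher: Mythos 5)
For parts $(i)$ and $(ii)$ your argument is essentially the paper's own (the fixed-point equation plus strict monotonicity of $\ell$; for Nash equilibria, comparison with vertices/the minimum of the linear functional $\mathbf{y}\mapsto\sum_i y_i\ell(x_i/p_i)$ forces full support and then equality of all $\ell(x_i/p_i)$). For $(iii)$--$(iv)$ you take a genuinely different route: the paper never uses the cumulative damage at all; in Propositions~\ref{Lyapunovfunctionnegative} and \ref{Dynamicsfornegative} it shows that the update preserves the ordering of the ratios $x_i/p_i$ (the sets $\mathsf{MaxInd}$, $\mathsf{MinInd}$ are invariant), that $\max_i x_i/p_i$ is nonincreasing and $\min_i x_i/p_i$ is nondecreasing along orbits, and it deduces coordinatewise convergence of \emph{every} orbit to $\mathbf{p}_{\textup{supp}(\mathbf{x})}$, from which stability of $\mathbf{p}$, instability of the boundary points $\mathbf{p}_\alpha$, and $(iv)$ all follow. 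Your plan ($\mathbf{CD}$ as a strict Lyapunov function, discrete LaSalle, isolated fixed points, the growth factor $1+|\varepsilon|\,\mathbf{CD}(\mathbf{p}_\alpha)>1$ of omitted coordinates near $\mathbf{p}_\alpha$) is a legitimate alternative, and its peripheral steps are sound: $\mathbf{CD}(\mathbf{x})=\sum_i p_i\,g(x_i/p_i)$ with $g(z)=z\ell(z)$ has a strict \emph{global} minimum at $\mathbf{p}$ (which is what rescues discrete-time Lyapunov stability from the ``jumping over the sphere'' issue), the escape argument at $\mathbf{p}_\alpha$ works, and single-point convergence needs the standard $\|\mathcal{R}^{(n+1)}(\mathbf{x})-\mathcal{R}^{(n)}(\mathbf{x})\|\to0$ argument, which is routine here.

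The genuine gap is the central claim that $\mathbf{CD}(\mathcal{R}(\mathbf{x}))<\mathbf{CD}(\mathbf{x})$ off $\textbf{Fix}(\mathcal{R})$: the mechanism you sketch does not prove it. Put $z_i=x_i/p_i$, $A=\mathbf{CD}(\mathbf{x})$, $z_i'=z_i(1+\varepsilon(\ell(z_i)-A))$. The mean-value step gives exactly $\mathbf{CD}(\mathcal{R}(\mathbf{x}))-\mathbf{CD}(\mathbf{x})=\sum_i p_i\bigl(g(z_i')-g(z_i)\bigr)=\varepsilon\sum_i x_i\,g'(\xi_i)\bigl(\ell(z_i)-A\bigr)$, and since $\sum_i x_i(\ell(z_i)-A)=0$, the bound $0\le g'\le\mu$ carries no sign information whatsoever (it only yields $|\Delta\mathbf{CD}|\le|\varepsilon|\mu\sum_i x_i|\ell(z_i)-A|$); in particular a bound by a negative multiple of $\varepsilon\,\mathrm{Var}_{\mathbf{x}}(\ell(z))$ does not follow from $|\varepsilon|\mu<1$ in this way. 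What does work is a correlation argument: use convexity one-sidedly, $g(z_i')-g(z_i)\le g'(z_i')(z_i'-z_i)$, note that $|\varepsilon|\mu<1$ makes the per-link map $z\mapsto z(1+\varepsilon(\ell(z)-A))$ strictly increasing (its derivative is $1+\varepsilon(g'(z)-A)\ge 1+\varepsilon g'(z)\ge 1-|\varepsilon|\mu>0$), so $g'(z_i')$ and $\ell(z_i)-A$ are comonotone nondecreasing functions of $z_i$; then Chebyshev's covariance inequality with weights $x_i$ gives $\sum_i x_i\,g'(z_i')(\ell(z_i)-A)\ge 0$, strictly positive off $\textbf{Fix}(\mathcal{R})$ because $g''=2\ell'+z\ell''>0$ on $(0,\bar{p}]$, whence $\Delta\mathbf{CD}\le\varepsilon\cdot(\text{positive})<0$ there (a variance-type lower bound is neither available at this generality nor needed for LaSalle). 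With this repair your route closes at the stated threshold; note that the order-preservation of the per-link map under $|\varepsilon|\mu<1$ is precisely the structural fact the paper itself exploits, in a different guise, through its Lyapunov functions $\max_i x_i/p_i-x_k/p_k$.
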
 

\subsection{Lyapunov Functions}

We study the dynamics  of the discrete-time replicator equation \eqref{REdefinedbyRM} by means of a Lyapunov function. Namely, in order to study the stability of fixed points of the replicator equation, we employ a Lyapunov function.

\begin{definition}[Lyapunov Function]\label{LF} A continuous function $\varphi:\mathbb{S}^{m-1}\to\mathbb{R}$ is called a \textit{Lyapunov function} if the number sequence $$\{\varphi(\mathbf{x}), \varphi(\mathcal{R}(\mathbf{x})), \cdots, \varphi(\mathcal{R}^{(n)}(\mathbf{x})), \cdots \}$$ is a bounded monotone sequence for any initial point $\mathbf{x}\in\mathbb{S}^{m-1}$. 
\end{definition}

\begin{proposition}\label{Lyapunovfunctionnegative} Let $\varepsilon\in(-\frac{1}{\mu},0)\cap(-1,0)$.
	Then the following statements hold true:
	\begin{description}
		\item[$(i)$]  A continuous function $\mathcal{M}_{\mathbf{p}:{k}}(\mathbf{x}):=\max\limits_{i\in\mathbf{I}_m}\{\frac{x_{i}}{p_i}\}-\frac{x_k}{p_k}$ for all $k\in\mathbf{I}_m$ ia a decreasing Lyapunov function for $\mathcal{R}:\textup{int}\mathbb{S}^{m-1}\to
		\textup{int}\mathbb{S}^{m-1}$;
		\item[$(ii)$] A continuous function $\mathcal{M}_{\mathbf{p}_\alpha: k}(\mathbf{x}):=\max\limits_{i\in\alpha}\{\frac{x_{i}}{p_i}\}-\frac{x_k}{p_k}$ for all  $k\in\alpha\subset\mathbf{I}_m$ is a decreasing Lyapunov function for $\mathcal{R}:\textup{int}\mathbb{S}^{|\alpha|-1}\to
		\textup{int}\mathbb{S}^{|\alpha|-1}$.
	\end{description}
\end{proposition}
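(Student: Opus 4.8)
The plan is to reduce both parts to a single pointwise inequality along one iteration and then squeeze it out of convexity. Since $\mathbb{S}^{m-1}$ is compact and each $\mathcal{M}_{\mathbf{p}:k}$ and $\mathcal{M}_{\mathbf{p}_\alpha:k}$ is continuous, every orbit sequence of values is automatically bounded, so it suffices to prove monotonicity, i.e. $\mathcal{M}_{\mathbf{p}:k}(\mathcal{R}(\mathbf{x}))\le\mathcal{M}_{\mathbf{p}:k}(\mathbf{x})$ for all $\mathbf{x}\in\mathrm{int}\mathbb{S}^{m-1}$, together with its face analogue. Throughout I would write $z_i:=x_i/p_i$, noting that $x_i\le\sum_j x_j=1$ and $p_i\ge\min_j p_j$ force $z_i\in[0,\bar p]$; set $g(z):=z\ell(z)$ and let $\bar\ell:=\sum_i x_i\ell(z_i)=\sum_i p_i g(z_i)$ be the cumulative damage at $\mathbf{x}$. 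Rewriting \eqref{REdefinedbyRM} gives $(\mathcal{R}(\mathbf{x}))_k/p_k=F(z_k)$, where $F(z):=z+\varepsilon\bigl(g(z)-\bar\ell\,z\bigr)$ with $\bar\ell$ frozen at $\mathbf{x}$.

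First I would establish two monotonicity facts. (a) $F$ is strictly increasing on $[0,\bar p]$: since $\ell\ge0$ is increasing and convex, one has $0\le g'(z)\le\mu$ and $0\le\bar\ell\le\mu$ on $[0,\bar p]$, hence $F'(z)=1+\varepsilon\bigl(g'(z)-\bar\ell\bigr)\ge1+\varepsilon\mu>0$, the last inequality being exactly the hypothesis $\varepsilon>-1/\mu$. In particular $\max_i(\mathcal{R}(\mathbf{x}))_i/p_i=F\bigl(\max_i z_i\bigr)$. (b) $g$ is convex on $[0,\bar p]$ with $g(0)=0$, because $g''=2\ell'+z\ell''\ge0$.

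Next, fixing an index $j$ with $z_j=\max_i z_i$, facts (a)--(b) turn the target inequality into $F(z_j)-F(z_k)\le z_j-z_k$, i.e. (dividing by $\varepsilon<0$) into $g(z_j)-g(z_k)\ge\bar\ell\,(z_j-z_k)$. When $z_j>z_k$, convexity of $g$ together with $g(0)=0$ makes the secant slope of $g$ over $[z_k,z_j]$ at least that over $[0,z_j]$, namely $g(z_j)/z_j=\ell(z_j)$; and since $\bar\ell=\sum_i x_i\ell(z_i)$ is a convex combination of the $\ell(z_i)$ and $\ell$ is increasing, $\bar\ell\le\ell(z_j)=\ell(\max_i z_i)$. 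Chaining, $\dfrac{g(z_j)-g(z_k)}{z_j-z_k}\ge\ell(z_j)\ge\bar\ell$, which is exactly what is needed (the degenerate case $z_j=z_k$ being trivial), so $\mathcal{M}_{\mathbf{p}:k}(\mathcal{R}(\mathbf{x}))=F(z_j)-F(z_k)\le z_j-z_k=\mathcal{M}_{\mathbf{p}:k}(\mathbf{x})$; this proves $(i)$. For $(ii)$ I would observe that each face $\{\mathbf{x}\in\mathbb{S}^{m-1}:x_i=0\ \text{for}\ i\notin\alpha\}$ is forward-invariant under $\mathcal{R}$ (a vanishing coordinate stays zero), that on such a face the replicator map has the identical algebraic form with all index sets restricted to $\alpha$ and $\sum_{i\in\alpha}x_i=1$, and that the bound $z_i\le\bar p$ still uses only $x_i\le1$ and $p_i\ge\min_{j\in\mathbf{I}_m}p_j$; hence the proof of $(i)$ transfers verbatim to $\mathcal{R}:\mathrm{int}\mathbb{S}^{|\alpha|-1}\to\mathrm{int}\mathbb{S}^{|\alpha|-1}$.

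The main obstacle is isolating the right two lemmas and recognizing that $\varepsilon\in(-1/\mu,0)$ is precisely the threshold keeping $F$ monotone — without it one cannot even identify $\mathcal{M}_{\mathbf{p}:k}(\mathcal{R}(\mathbf{x}))$ with $F(z_j)-F(z_k)$ — and that the estimate $g(z_j)-g(z_k)\ge\bar\ell\,(z_j-z_k)$ has to be extracted from convexity of $g$ anchored at $g(0)=0$ (the naive route via a lower bound on $g'$ near $0$ fails, since $g'(0)=\ell(0)=0$). The residual points — that $\mathcal{R}$ maps each relative interior into itself and that the values of $\mathcal{M}_{\mathbf{p}:k}$ along an orbit stay in a fixed compact interval, so Definition~\ref{LF} applies — are routine consequences of the same elementary bounds on $\ell$ and $\bar\ell$.
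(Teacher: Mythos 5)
Your proposal is correct and proves the same two facts the paper proves, but it packages them differently. The skeleton coincides: first show that one application of $\mathcal{R}$ preserves the index of $\max_i x_i/p_i$ (this is where $\varepsilon>-1/\mu$ enters), then show that the gap $\max_i x_i/p_i-x_k/p_k$ cannot grow, and iterate; boundedness is free from compactness. For the first step the paper factors the pairwise difference $\frac{(\mathcal{R}(\mathbf{x}))_t}{p_t}-\frac{(\mathcal{R}(\mathbf{x}))_k}{p_k}$ and bounds the divided difference of $z\ell(z)$ by $\mu$, which is exactly the mean-value form of your derivative bound $F'(z)=1+\varepsilon\left(g'(z)-\bar\ell\right)\geq 1+\varepsilon\mu>0$ for the frozen scalar map $F$; these are the same estimate in two guises (your asserted $\bar\ell\leq\mu$ does need the one-line chain $\bar\ell\leq\ell(\bar p)\leq\ell(\bar p)+\bar p\,\ell'(\bar p)\leq\mu$). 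For the second step you genuinely diverge: you reduce to $g(z_j)-g(z_k)\geq\bar\ell\,(z_j-z_k)$ and get it from convexity of $g$ anchored at $g(0)=0$ together with $\bar\ell\leq\ell(z_j)$, whereas the paper obtains the same inequality from an exact algebraic identity whose terms are nonnegative using only monotonicity of $\ell$. Two minor caveats in your route: writing $g''=2\ell'+z\ell''\geq0$ presumes a second derivative that the paper does not assume ($\ell$ is only $C^1$); either argue that $g'=\ell+z\ell'$ is nondecreasing because $\ell'$ is nonnegative and nondecreasing, or, simpler still, note that $g(z_j)-g(z_k)=z_j\ell(z_j)-z_k\ell(z_k)\geq(z_j-z_k)\ell(z_j)$ directly from $\ell(z_k)\leq\ell(z_j)$, which makes the convexity detour unnecessary. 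Finally, the forward invariance of $\textup{int}\,\mathbb{S}^{m-1}$ (needed to iterate the one-step inequality), which you flag as routine, indeed follows from $|\ell(x_k/p_k)-\bar\ell|\leq\ell(\bar p)\leq\mu$ and $|\varepsilon|<1/\mu$, so your argument closes; your restriction-to-a-face argument for part $(ii)$ matches the paper's (omitted) proof.
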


\begin{proof}
	Here, we only present the proof of the part $(i)$. The part $(ii)$ can be similarly proved. Its proof is omitted here. 
	
	We show that $\mathcal{M}_{\mathbf{p}:{k}}(\mathbf{x}):=\max\limits_{i\in\mathbf{I}_m}\{\frac{x_{i}}{p_i}\}-\frac{x_k}{p_k}$ for all $k\in\mathbf{I}_m$ is a decreasing Lyapunov function along an orbit (trajectory) of the operator $\mathcal{R}:\textup{int}\mathbb{S}^{m-1}\to
	\textup{int}\mathbb{S}^{m-1}$. 
	
	Let $\mathbf{x}\in\textup{int}\mathbb{S}^{m-1}$. It follows from \eqref{REdefinedbyRM} for any $k,t\in\mathbf{I}_m$ that if $x_t=x_k$ then $(\mathcal{R}(\mathbf{x}))_t=(\mathcal{R}(\mathbf{x}))_k$ and  if $x_t\neq x_k$ then 
	
	\begin{equation*}
		\frac{(\mathcal{R}(\mathbf{x}))_t}{p_t}-\frac{(\mathcal{R}(\mathbf{x}))_k}{p_k}
		=\left(\frac{x_t}{p_t}-\frac{x_k}{p_k}\right)\left[1+\varepsilon\left(\frac{\frac{x_t}{p_t}\ell(\frac{x_t}{p_t})-\frac{x_k}{p_k}\ell(\frac{x_k}{p_k})}{\frac{x_t}{p_t}-\frac{x_k}{p_k}}-\sum\limits_{i=1}^mx_i\ell\left(\frac{x_i}{p_i}\right)\right)\right].
	\end{equation*}
	
	Since $-\frac{1}{\mu}<\varepsilon<0$ and for all $t,k\in\mathbf{I}_m$
	$$
	\left|\frac{\frac{x_t}{p_t}\ell(\frac{x_t}{p_t})-\frac{x_k}{p_k}\ell(\frac{x_k}{p_k})}{\frac{x_t}{p_t}-\frac{x_k}{p_k}}\right|\leq \max\limits_{z\in[0,\bar{p}]}\frac{d}{dz}\left(z\ell(z)\right)=\mu,
	$$
	we obtain  
	$$\mathsf{Sign}\left(\frac{(\mathcal{R}(\mathbf{x}))_t}{p_t}-\frac{(\mathcal{R}(\mathbf{x}))_k}{p_k}\right)=\mathsf{Sign}\left(\frac{x_t}{p_t}-\frac{x_k}{p_k}\right)$$ for any $k,t\in\mathbf{I}_m$ and $$\mathsf{MaxInd}\left(\frac{\mathcal{R}(\mathbf{x})}{\mathbf{p}}\right)=\mathsf{MaxInd}\left(\frac{\mathbf{x}}{\mathbf{p}}\right), \quad \forall \ \mathbf{x}\in\textup{int}\mathbb{S}^{m-1}$$ 
	where  $\frac{\mathbf{u}}{\mathbf{v}}:=\left(\frac{u_1}{v_1},\cdots, \frac{u_m}{v_m}\right)$ for any two vectors $\mathbf{u}=(u_1,\cdots,u_m)$ and $\mathbf{v}=(v_1,\cdots,v_m)>0$ and $\mathsf{MaxInd}\left(\frac{\mathbf{u}}{\mathbf{v}}\right):=\{k\in\mathbf{I}_m: \frac{u_k}{v_k}=\max\limits_{i\in\mathbf{I}_m}\{\frac{u_i}{v_i}\}\}$.
	
	Moreover, if $t\in\mathsf{MaxInd}\left(\frac{\mathcal{R}(\mathbf{x})}{\mathbf{p}}\right)=\mathsf{MaxInd}\left(\frac{\mathbf{x}}{\mathbf{p}}\right)$ then for all $k\in\mathbf{I}_m$ we get
	\begin{multline*}
		\frac{(\mathcal{R}(\mathbf{x}))_t}{p_t}-\frac{(\mathcal{R}(\mathbf{x}))_k}{p_k}\\
		=\left(\frac{x_t}{p_t}-\frac{x_k}{p_k}\right)
		+\varepsilon\left(\frac{x_t}{p_t}-\frac{x_k}{p_k}\right)\left(\frac{\frac{x_k}{p_k}\left(\ell(\frac{x_t}{p_t})-\ell(\frac{x_k}{p_k})\right)}{\frac{x_t}{p_t}-\frac{x_k}{p_k}}+\sum\limits_{i=1}^mx_i\left(\ell(\frac{x_t}{p_t})-\ell(\frac{x_i}{p_i})\right)\right).
	\end{multline*}
	Since $-1<\varepsilon<0$, we obtain that $\mathcal{M}_{\mathbf{p}:{k}}\left(\mathcal{R}(\mathbf{x})\right)\leq \mathcal{M}_{\mathbf{p}:{k}}(\mathbf{x})$ for all $k\in\mathbf{I}_m$.
	
	By repeating this process, we get for all $k\in\mathbf{I}_m$, $n\in\mathbb{N}$, and $\mathbf{x}\in\textup{int}\mathbb{S}^{m-1}$ that
	$$
	\mathsf{MaxInd}\left(\frac{\mathcal{R}^{(n)}(\mathbf{x})}{\mathbf{p}}\right)=\mathsf{MaxInd}\left(\frac{\mathbf{x}}{\mathbf{p}}\right), \quad \mathcal{M}_{\mathbf{p}:{k}}\left(\mathcal{R}^{(n+1)}(\mathbf{x})\right)\leq \mathcal{M}_{\mathbf{p}:{k}}\left(\mathcal{R}^{(n)}(\mathbf{x})\right).
	$$
	
	This shows that $\mathcal{M}_{\mathbf{p}:{k}}(\mathbf{x})$ for $k\in\mathbf{I}_m$ is a Lyapunov function over the set $\textup{int}\mathbb{S}^{m-1}$. This completes the proof.
\end{proof}

\subsection{Asymptotic Stability}

We can describe the asymptotic stability of the discrete-time replicator equation \eqref{REdefinedbyRM} all over the simplex $\mathbb{S}^{m-1}$. 

\begin{definition}[Stable and Attracting Points]
	A fixed point $\mathbf{y}\in\mathbb{S}^{m-1}$ is called \textit{stable} if for every neighborhood $U(\mathbf{y})\subset\mathbb{S}^{m-1}$ of $\mathbf{y}$ there exists a neighborhood $V(\mathbf{y})\subset\mathbb{S}^{m-1}$ of $\mathbf{y}$ such that an orbit $\{\mathbf{x}, \mathcal{R}(\mathbf{x}), \cdots, \mathcal{R}^{(n)}(\mathbf{x}), \cdots\}$ of any initial point $\mathbf{x}\in V(\mathbf{y})$ remains inside of the neighborhood $U(\mathbf{y})$. A fixed point $\mathbf{y}\in\mathbb{S}^{m-1}$ is called \textit{attracting} if there exists a neighborhood $V(\mathbf{y})\subset\mathbb{S}^{m-1}$ of $\mathbf{y}$ such that an orbit $\{\mathbf{x}, \mathcal{R}(\mathbf{x}), \cdots, \mathcal{R}^{(n)}(\mathbf{x}), \cdots\}$ of any initial point $\mathbf{x}\in V(\mathbf{y})$ converges to $\mathbf{y}$. A fixed point $\mathbf{y}\in\mathbb{S}^{m-1}$ is called \textit{asymptotically stable} if it is both stable and attracting. 
\end{definition}

\begin{proposition}\label{Dynamicsfornegative} 
	Let $\varepsilon\in(-\frac{1}{\mu},0)\cap(-1,0)$. Then an orbit of the replicator equation $\mathcal{R}:\mathbb{S}^{m-1}\to\mathbb{S}^{m-1}$ starting from any initial point $\mathbf{x}\in\mathbb{S}^{m-1}$ converges to the fixed point $\mathbf{p}_{\textup{supp}(\mathbf{x})}$ in the interior of the face $\mathbb{S}^{|\textup{supp}(\mathbf{x})|-1}$.
\end{proposition}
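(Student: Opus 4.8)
The plan is to reduce everything to the relative interior of the face spanned by $\alpha:=\textup{supp}(\mathbf{x})$ and then run a discrete $\omega$-limit (LaSalle-type) argument built on the monotone quantities supplied by Proposition~\ref{Lyapunovfunctionnegative}. Write $\mathbb{F}_\alpha:=\{\mathbf{z}\in\mathbb{S}^{m-1}:\textup{supp}(\mathbf{z})\subseteq\alpha\}$, a copy of $\mathbb{S}^{|\alpha|-1}$. First I would check that for $\varepsilon\in(-\frac{1}{\mu},0)$ the bracket $1+\varepsilon\bigl(\ell(\frac{x_k}{p_k})-\sum_i x_i\ell(\frac{x_i}{p_i})\bigr)$ in \eqref{REdefinedbyRM} is strictly positive: one has $0\le\ell(\frac{x_k}{p_k})\le\mu$ and $0\le\sum_i x_i\ell(\frac{x_i}{p_i})\le\mu$ because $z\ell(z)=\int_0^z\frac{d}{dt}(t\ell(t))\,dt\le\mu z$ on $[0,\bar{p}]$ and $\frac{x_k}{p_k}\le\bar{p}$, so the bracket is $\ge 1-|\varepsilon|\mu>0$. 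Hence $\mathcal{R}$ preserves the support \emph{exactly}, the orbit $\{\mathcal{R}^{(n)}(\mathbf{x})\}$ stays in $\textup{int}\,\mathbb{F}_\alpha$, and on $\mathbb{F}_\alpha$ the map $\mathcal{R}$ is again a replicator equation of the same type associated with the positive distribution $\mathbf{p}_\alpha$ and the same constant $\mu$ (since $\frac{z_i}{p_i}\le\bar{p}$ also for $i\in\alpha$).

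Next I would import from the proof of Proposition~\ref{Lyapunovfunctionnegative} the two structural facts that $\mathcal{R}$ preserves the sign of $\frac{x_i}{p_i}-\frac{x_j}{p_j}$ for all $i,j\in\alpha$, and in particular preserves the maximizing and minimizing index sets of $\frac{\mathbf{x}}{\mathbf{p}}$ on $\alpha$. Combined with the elementary computation that, for $t$ a maximizing index, $\ell(\frac{x_t}{p_t})\ge\sum_i x_i\ell(\frac{x_i}{p_i})$ and $\varepsilon<0$ give $\frac{(\mathcal{R}(\mathbf{x}))_t}{p_t}\le\frac{x_t}{p_t}$ (and symmetrically for a minimizing index), this shows that $M_\alpha(\mathbf{x}):=\max_{i\in\alpha}\frac{x_i}{p_i}$ is non-increasing and $m_\alpha(\mathbf{x}):=\min_{i\in\alpha}\frac{x_i}{p_i}$ is non-decreasing along the orbit. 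Since $m_\alpha>0$ on $\textup{int}\,\mathbb{F}_\alpha$, the orbit is trapped in the compact set $\{\mathbf{z}\in\mathbb{F}_\alpha:\ \frac{z_i}{p_i}\ge m_\alpha(\mathbf{x})\ \forall i\in\alpha\}\subset\textup{int}\,\mathbb{F}_\alpha$; in particular its set $\omega(\mathbf{x})$ of subsequential limits is a nonempty subset of $\textup{int}\,\mathbb{F}_\alpha$, and the bounded monotone sequences $M_\alpha(\mathcal{R}^{(n)}(\mathbf{x}))$, $m_\alpha(\mathcal{R}^{(n)}(\mathbf{x}))$ converge.

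The heart of the proof is then identifying $\omega(\mathbf{x})$. Pick $\mathbf{y}\in\omega(\mathbf{x})$ with $\mathbf{x}^{(n_j)}\to\mathbf{y}$; by continuity of $\mathcal{R}$ and convergence of the sequence $M_\alpha(\mathcal{R}^{(n)}(\mathbf{x}))$ one gets $M_\alpha(\mathcal{R}(\mathbf{y}))=M_\alpha(\mathbf{y})$. Writing this with $t$ a maximizing index of $\frac{\mathbf{y}}{\mathbf{p}}$, the identity $\frac{y_t}{p_t}\bigl[1+\varepsilon\bigl(\ell(\tfrac{y_t}{p_t})-\sum_{i\in\alpha}y_i\ell(\tfrac{y_i}{p_i})\bigr)\bigr]=\frac{y_t}{p_t}$ forces $\ell(\frac{y_t}{p_t})=\sum_{i\in\alpha}y_i\ell(\frac{y_i}{p_i})$; as $\ell(\frac{y_t}{p_t})\ge\ell(\frac{y_i}{p_i})$ for every $i\in\alpha$, $\sum_{i\in\alpha}y_i=1$, and every $y_i>0$, this convex-combination equality can hold only if $\ell(\frac{y_i}{p_i})=\ell(\frac{y_t}{p_t})$ for all $i\in\alpha$, whence by strict monotonicity of $\ell$ the ratio $\frac{y_i}{p_i}$ is constant on $\alpha$, i.e. $\mathbf{y}=\mathbf{p}_\alpha$ (consistent with Theorem~\ref{DynamicsofWOF}$(i)$, since $\mathbf{p}_\alpha$ is the only fixed point lying in $\textup{int}\,\mathbb{F}_\alpha$). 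Thus $\omega(\mathbf{x})=\{\mathbf{p}_\alpha\}$, and a precompact orbit with a unique subsequential limit converges to it.

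I expect the main obstacle to be the bookkeeping that makes the $\omega$-limit argument legitimate on a face rather than on the full simplex: namely verifying that the orbit really stays in a compact subset of $\textup{int}\,\mathbb{F}_\alpha$ (so that the equality case forces \emph{all} ratios on $\alpha$ to coincide, not merely those on some proper sub-support) and that the restricted dynamics still satisfies the hypotheses with the same $\mu$. Once the monotonicity of $m_\alpha$ pins the orbit away from $\partial\mathbb{F}_\alpha$, the rest reduces to the short sign/monotonicity computations above together with the standard fact that a precompact orbit with a single $\omega$-limit point converges.
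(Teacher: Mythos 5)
Your argument is correct, and it runs on the same machinery as the paper's proof---the sign/index preservation and the monotone extreme ratios supplied by Proposition~\ref{Lyapunovfunctionnegative}, together with the non-decrease of $\min_{i}\frac{x_i}{p_i}$ that keeps the orbit away from the boundary---but the closing step is genuinely different. The paper first converts the convergent monotone differences $\mathcal{M}_{\mathbf{p}:k}\left(\mathcal{R}^{(n)}(\mathbf{x})\right)$ into coordinatewise convergence of the orbit by combining them with the constraint $\sum_k\left(\mathcal{R}^{(n)}(\mathbf{x})\right)_k=1$, so the whole orbit converges to some fixed point, and then identifies that limit as $\mathbf{p}$ (resp.\ $\mathbf{p}_\alpha$ on a face) because the limit lies in the interior, where it is the only fixed point. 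You instead settle for precompactness of the orbit inside the relative interior of the face $\mathbb{F}_\alpha$ and run a LaSalle-type equality-case analysis at an $\omega$-limit point: $M_\alpha(\mathcal{R}(\mathbf{y}))=M_\alpha(\mathbf{y})$ forces $\ell\left(\frac{y_t}{p_t}\right)=\sum_{i\in\alpha}y_i\ell\left(\frac{y_i}{p_i}\right)$, and since every $y_i>0$ this collapses all ratios and yields $\mathbf{y}=\mathbf{p}_\alpha$, hence $\omega(\mathbf{x})=\{\mathbf{p}_\alpha\}$ and the precompact orbit converges. Your route buys two things: it bypasses the paper's step of recovering the coordinates from the differences (whose printed formula, incidentally, omits the weights $p_k$), and it makes explicit the bracket positivity $1-|\varepsilon|\mu>0$, which is what guarantees exact support preservation and that $\mathcal{R}$ maps the simplex into itself---a point the paper leaves implicit; the paper's route, in exchange, gets convergence of the full orbit directly, without any $\omega$-limit reasoning. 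One harmless imprecision on your side: the restriction of $\mathcal{R}$ to $\mathbb{F}_\alpha$ is not literally the replicator equation associated with the renormalized distribution $\mathbf{p}_\alpha$ (it still involves the ratios $\frac{x_i}{p_i}$ rather than $\frac{x_i s_\alpha(\mathbf{p})}{p_i}$), but since all your estimates are phrased with the original $p_i$ and the same bound $\bar{p}$, and the interior fixed point of the restricted map is indeed $\mathbf{p}_\alpha$, nothing in the argument is affected.
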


\begin{proof}
	Without loss of generality, we may assume that $\mathbf{x}\in\textup{int}\mathbb{S}^{m-1}$. Otherwise, we choose a suitable Lyapunov function given in the part $(ii)$ of Proposition~\ref{Lyapunovfunctionnegative} depending on an initial point $\mathbf{x}\in\mathbb{S}^{m-1}$.
	
	We fix $t_0\in\mathsf{MaxInd}\left(\frac{\mathbf{x}}{\mathbf{p}}\right)$. As shown in Proposition \ref{Lyapunovfunctionnegative}, we obtain for any $n\in\mathbb{N}$ that  $$\mathsf{MaxInd}\left(\frac{\mathcal{R}^{(n)}(\mathbf{x})}{\mathbf{p}}\right)=\mathsf{MaxInd}\left(\frac{\mathbf{x}}{\mathbf{p}}\right).$$ 
	
	Since the sequence $
	\left\{\mathcal{M}_{\mathbf{p}:{k}}\left(\mathcal{R}^{(n)}(\mathbf{x})\right)\right\}_{n=0}^\infty$ where
	$$
	\mathcal{M}_{\mathbf{p}:{k}}\left(\mathcal{R}^{(n)}(\mathbf{x})\right)=\left(\mathcal{R}^{(n)}(\mathbf{x})\right)_{t_0}-\left(\mathcal{R}^{(n)}(\mathbf{x})\right)_k
	$$
	is convergent for each $k\in\mathbf{I}_m$, the sequence $\left\{\left(\mathcal{R}^{(n)}(\mathbf{x})\right)_{t_0}\right\}_{n=0}^\infty$ where
	$$
	\left(\mathcal{R}^{(n)}(\mathbf{x})\right)_{t_0}=\frac{1}{m}\left(1+\sum_{k=1}^m\mathcal{M}_{\mathbf{p}:{k}}\left(\mathcal{R}^{(n)}(\mathbf{x})\right)\right)
	$$
	is also convergent ($m$ is the number of total links $\mathbf{I}_m=\{1,2,\cdots m\}$). 
	
	Hence, the sequence $\left\{\left(\mathcal{R}^{(n)}(\mathbf{x})\right)_{k}\right\}_{n=0}^\infty$ for each $k\in\mathbf{I}_m$ where
	$$
	\left(\mathcal{R}^{(n)}(\mathbf{x})\right)_k=\left(\mathcal{R}^{(n)}(\mathbf{x})\right)_{t_0}-\mathcal{M}_{\mathbf{p}:{k}}\left(\mathcal{R}^{(n)}(\mathbf{x})\right)
	$$
	is also convergent. This means that the orbit (trajectory) $\left\{\mathcal{R}^{(n)}(\mathbf{x})\right\}_{n=0}^\infty$ of the discrete-time replicator equation  $\mathcal{R}:\textup{int}\mathbb{S}^{m-1}\to
	\textup{int}\mathbb{S}^{m-1}$ starting from any initial point $\mathbf{x}\in\textup{int}\mathbb{S}^{m-1}$ is convergent and its omega limiting point is some fixed point $\mathbf{x}^{*}$. 
	
	Moreover, we now show that the orbit (trajectory) $\left\{\mathcal{R}^{(n)}(\mathbf{x})\right\}_{n=0}^\infty$ is separated from the boundary $\partial\mathbb{S}^{m-1}$ of the simplex $\mathbb{S}^{m-1}$, i.e., for some $\delta_0>0$ we have 
	$$
	\min\limits_{k\in\mathbf{I}_m}\frac{\left(\mathcal{R}^{(n)}(\mathbf{x})\right)_k}{p_k}\geq \delta_0, \quad \forall \ n\in\mathbb{N}.
	$$
	
	Indeed, as we already showed in the part $(i)$ of Proposition~\ref{Lyapunovfunctionnegative} that 	$$\mathsf{Sign}\left(\frac{(\mathcal{R}(\mathbf{x}))_t}{p_t}-\frac{(\mathcal{R}(\mathbf{x}))_k}{p_k}\right)=\mathsf{Sign}\left(\frac{x_t}{p_t}-\frac{x_k}{p_k}\right)$$ for any $k,t\in\mathbf{I}_m$, we obtain $$\mathsf{MinInd}\left(\frac{\mathcal{R}(\mathbf{x})}{\mathbf{p}}\right)=\mathsf{MinInd}\left(\frac{\mathbf{x}}{\mathbf{p}}\right), \quad \forall \ \mathbf{x}\in\textup{int}\mathbb{S}^{m-1}$$ 
	
	Since $-1<\varepsilon<0$, it follows from \eqref{REdefinedbyRM} and  $k\in\mathsf{MinInd}\left(\frac{\mathbf{x}}{\mathbf{p}}\right)$ that
	$$
	\frac{\left(\mathcal{R}(\mathbf{x})\right)_k}{p_k}=\frac{x_k}{p_k}\left[1-\varepsilon\sum\limits_{i=1}^mx_i\left(\ell\left(\frac{x_i}{p_i}\right)-\ell\left(\frac{x_k}{p_k}\right)\right)\right]\geq \frac{x_k}{p_k}.
	$$
	
	This means that $	\min\limits_{k\in\mathbf{I}_m}\frac{\left(\mathcal{R}(\mathbf{x})\right)_k}{p_k}\geq\delta_0$ for $\delta_0:=\min\limits_{k\in\mathbf{I}_m}\frac{x_k}{p_k}>0$ (since $\mathbf{x}\in\textup{int}\mathbb{S}^{m-1}$).
	
	By repeating this process, we obtain that
	$$
	\mathsf{MinInd}\left(\frac{\mathcal{R}^{(n)}(\mathbf{x})}{\mathbf{p}}\right)=\mathsf{MinInd}\left(\frac{\mathbf{x}}{\mathbf{p}}\right),\quad \min\limits_{k\in\mathbf{I}_m}\frac{\left(\mathcal{R}^{(n)}(\mathbf{x})\right)_k}{p_k}\geq \delta_0, \quad \forall \ n\in\mathbb{N}.
	$$ 
	
	Since the trajectory $\left\{\mathcal{R}^{(n)}(\mathbf{x})\right\}_{n=0}^\infty$ of the operator  $\mathcal{R}:\textup{int}\mathbb{S}^{m-1}\to
	\textup{int}\mathbb{S}^{m-1}$ is separated from the boundary $\partial\mathbb{S}^{m-1}$ of the simplex $\mathbb{S}^{m-1}$, its limiting point $\mathbf{x}^{*}$ belongs to the interior $\textup{int}\mathbb{S}^{m-1}$ of the simplex $\mathbb{S}^{m-1}$. Then the only fixed point in the interior $\textup{int}\mathbb{S}^{m-1}$ of the simplex $\mathbb{S}^{m-1}$ is the Wardrop optimal flow $\mathbf{p}$. Hence, the omega limiting point of the trajectory starting from any initial point $\mathbf{x}\in\textup{int}\mathbb{S}^{m-1}$ is the Wardrop optimal flow $\mathbf{p}\in\textup{int}\mathbb{S}^{m-1}$.
	This completes the proof.
\end{proof}

We need the following result in the proof of the main result Theorem~\ref{DynamicsofWOF}.

\begin{proposition}\label{simplefact}
	Let $\mathbf{x},\mathbf{y}\in\mathbb{S}^{m-1}$ be two {\textbf{distinct}} elements of the simplex $\mathbb{S}^{m-1}$ such that $\mathbf{y}>0$. Let $\frac{\mathbf{x}}{\mathbf{y}}:=(\frac{x_1}{y_1},\cdots,\frac{x_m}{y_m})$, $\bar{y}:=\min\limits_{k\in\mathbf{I}_m}y_k>0$ and $\|\mathbf{x}-\mathbf{y}\|_1:=\sum_{k\in\mathbf{I}_m}|x_k-y_k|$. Then the following statements hold true:
	\begin{description}
		\item[$(i)$] $\min\limits_{k\in\mathbf{I}_m}\frac{x_k}{y_k}<1<\max\limits_{k\in\mathbf{I}_m}\frac{x_k}{y_k}$
		\item[$(ii)$] $\bar{y}\left[\max\limits_{k\in\mathbf{I}_m}\frac{x_k}{y_k}-\min\limits_{k\in\mathbf{I}_m}\frac{x_k}{y_k}\right]\leq \|\mathbf{x}-\mathbf{y}\|_1 \leq m\left[\max\limits_{k\in\mathbf{I}_m}\frac{x_k}{y_k}-\min\limits_{k\in\mathbf{I}_m}\frac{x_k}{y_k}\right]$
	\end{description}
\end{proposition}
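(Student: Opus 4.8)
The plan is to work throughout with the ratio vector $\frac{\mathbf{x}}{\mathbf{y}}$, whose coordinates I abbreviate $r_k:=\frac{x_k}{y_k}$, and to lean on the single identity $\sum_{k} y_k r_k=\sum_k x_k=1=\sum_k y_k$; that is, $1$ is a convex combination of the numbers $r_1,\dots,r_m$ with strictly positive weights $y_k$. Everything in the statement is an elementary consequence of this observation together with the triangle inequality, so the proof is essentially bookkeeping; I will be careful only about the two places where part $(i)$ feeds into part $(ii)$.

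For part $(i)$ I would argue directly. Since $\mathbf{x}\neq\mathbf{y}$ but $\sum_k(x_k-y_k)=0$, the numbers $x_k-y_k$ cannot all vanish and cannot all have the same sign; hence there is an index $k$ with $x_k>y_k$ and an index $k'$ with $x_{k'}<y_{k'}$. Dividing by the positive numbers $y_k,y_{k'}$ gives $\max_i r_i\ge r_k>1>r_{k'}\ge\min_i r_i$, which is exactly $\min_i r_i<1<\max_i r_i$. (Equivalently: a convex combination of the $r_i$ with all weights positive can equal $\min_i r_i$ or $\max_i r_i$ only if all the $r_i$ coincide.)

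For part $(ii)$ the starting point is $|x_k-y_k|=y_k\,|r_k-1|$ for every $k$. Write $\overline{r}:=\max_i r_i$ and $\underline{r}:=\min_i r_i$, so that $\underline{r}<1<\overline{r}$ by part $(i)$. For the upper bound, each $r_k$ lies in $[\underline{r},\overline{r}]$ and $1$ lies in this interval, so $|r_k-1|\le\max(1-\underline{r},\,\overline{r}-1)\le\overline{r}-\underline{r}$; summing and using $\sum_k y_k=1$ yields $\|\mathbf{x}-\mathbf{y}\|_1=\sum_k y_k|r_k-1|\le\overline{r}-\underline{r}\le m(\overline{r}-\underline{r})$, which is in fact slightly stronger than required. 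For the lower bound, choose indices $i,j$ with $r_i=\overline{r}$ and $r_j=\underline{r}$; since $\overline{r}>1>\underline{r}$ we have $i\neq j$, so these two terms appear separately in the $\ell^1$ sum, and
$$\|\mathbf{x}-\mathbf{y}\|_1\ge |x_i-y_i|+|x_j-y_j|=y_i(\overline{r}-1)+y_j(1-\underline{r})\ge \bar{y}\big((\overline{r}-1)+(1-\underline{r})\big)=\bar{y}(\overline{r}-\underline{r}).$$

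There is no serious obstacle here. The only two points that need a moment's care are precisely the ones just flagged: invoking part $(i)$ so that $1$ lies strictly between $\min_i r_i$ and $\max_i r_i$, which legitimizes the crude estimate $|r_k-1|\le\overline{r}-\underline{r}$ used for the upper bound; and noting that the maximizing and minimizing indices are distinct, so their contributions add rather than possibly coincide in the lower bound. Everything else is a one-line computation.
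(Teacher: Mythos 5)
Your proposal is correct and follows essentially the same route as the paper: part $(i)$ from $\sum_k x_k=\sum_k y_k=1$ together with distinctness (you argue directly, the paper by contradiction), and part $(ii)$ from the identity $|x_k-y_k|=y_k\left|\frac{x_k}{y_k}-1\right|$, isolating the (distinct) maximizing and minimizing indices for the lower bound and using the interval containment of $1$ for the upper bound. The only difference is cosmetic: by using $\sum_k y_k=1$ instead of the termwise bound $y_k\leq 1$ you obtain the sharper estimate $\|\mathbf{x}-\mathbf{y}\|_1\leq \max_k\frac{x_k}{y_k}-\min_k\frac{x_k}{y_k}$, which of course implies the stated bound with the factor $m$.
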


\begin{proof} Since $\mathbf{x}\neq\mathbf{y}$ and $\mathbf{x},\mathbf{y}\in\mathbb{S}^{m-1}$, we have that $x_{k_0}\neq y_{k_0}$ for some $k_0\in\mathbf{I}_m$ and 
	$$\sum_{k\in\mathbf{I}_m}x_k=\sum_{k\in\mathbf{I}_m}y_k=1.$$
	
	$(i)$ Let us first assume that $\max\limits_{k\in\mathbf{I}_m}\frac{x_k}{y_k}\leq 1$. Then $\frac{x_k}{y_k}\leq 1$ for all  $k\in\mathbf{I}_m$ or equivalently ${x_k}\leq {y_k}$ for all $k\in\mathbf{I}_m$. Since $x_{k_0}\neq y_{k_0}$ for some $k_0\in\mathbf{I}_m$, we have that ${x_{k_0}}<{y_{k_0}}$. Consequently, we have
	$$
	1=\sum_{k\in\mathbf{I}_m}x_k<\sum_{k\in\mathbf{I}_m}y_k=1
	$$
	
	However, this is a contradiction. So, we must have $\max\limits_{k\in\mathbf{I}_m}\frac{x_k}{y_k}>1$.
	
	Let us now assume that $\min\limits_{k\in\mathbf{I}_m}\frac{x_k}{y_k}\geq 1$. Then $\frac{x_k}{y_k}\geq 1$ for all  $k\in\mathbf{I}_m$ or equivalently ${x_k}\geq {y_k}$ for all $k\in\mathbf{I}_m$. Since $x_{k_0}\neq y_{k_0}$ for some $k_0\in\mathbf{I}_m$, we have that ${x_{k_0}}>{y_{k_0}}$. Consequently, we have
	$$
	1=\sum_{k\in\mathbf{I}_m}x_k>\sum_{k\in\mathbf{I}_m}y_k=1
	$$
	
	However, this is a contradiction. So, we must have $\min\limits_{k\in\mathbf{I}_m}\frac{x_k}{y_k}<1$.
	
	Hence, we always have $\min\limits_{k\in\mathbf{I}_m}\frac{x_k}{y_k}<1<\max\limits_{k\in\mathbf{I}_m}\frac{x_k}{y_k}$.
	
	$(ii)$ On the one hand, we have 
	\begin{multline*}
		\|\mathbf{x}-\mathbf{y}\|_1:=\sum_{k\in\mathbf{I}_m}|x_k-y_k|=\sum_{k\in\mathbf{I}_m}y_k\left|\frac{x_k}{y_k}-1\right|\geq \sum_{k\in\mathbf{I}_m}\bar{y}\left|\frac{x_k}{y_k}-1\right|\\
		\geq \bar{y}\left|\max\limits_{k\in\mathbf{I}_m}\frac{x_k}{y_k}-1\right|+\bar{y}\left|\min\limits_{k\in\mathbf{I}_m}\frac{x_k}{y_k}-1\right|=\bar{y}\left[\left(\max\limits_{k\in\mathbf{I}_m}\frac{x_k}{y_k}-1\right)+\left(1-\min\limits_{k\in\mathbf{I}_m}\frac{x_k}{y_k}\right)\right]\\
		=\bar{y}\left[\max\limits_{k\in\mathbf{I}_m}\frac{x_k}{y_k}-\min\limits_{k\in\mathbf{I}_m}\frac{x_k}{y_k}\right]
	\end{multline*}
	
	On the other hand, since $1$ and $\frac{x_k}{y_k}$ belong to the segment $\left[\min\limits_{k\in\mathbf{I}_m}\frac{x_k}{y_k}, \ \max\limits_{k\in\mathbf{I}_m}\frac{x_k}{y_k}\right]$ we have 
	$$
	\left|\frac{x_k}{y_k}-1\right|\leq \left[\max\limits_{k\in\mathbf{I}_m}\frac{x_k}{y_k}-\min\limits_{k\in\mathbf{I}_m}\frac{x_k}{y_k}\right],  \quad \forall \ k\in\mathbf{I}_m.
	$$  
	Consequently, by means of $0<y_k<1$, we derive 
	\begin{multline*}
		\|\mathbf{x}-\mathbf{y}\|_1:=\sum_{k\in\mathbf{I}_m}|x_k-y_k|=\sum_{k\in\mathbf{I}_m}y_k\left|\frac{x_k}{y_k}-1\right|\leq \sum_{k\in\mathbf{I}_m}\left|\frac{x_k}{y_k}-1\right|\\
		\leq \sum_{k\in\mathbf{I}_m}\left[\max\limits_{k\in\mathbf{I}_m}\frac{x_k}{y_k}-\min\limits_{k\in\mathbf{I}_m}\frac{x_k}{y_k}\right]=m\left[\max\limits_{k\in\mathbf{I}_m}\frac{x_k}{y_k}-\min\limits_{k\in\mathbf{I}_m}\frac{x_k}{y_k}\right]
	\end{multline*}
	This completes the proof.
\end{proof}

\subsection{The Proof of Theorem \ref{DynamicsofWOF}}

We now provide the proof of Theorem \ref{DynamicsofWOF}. 

$(i)$
We first show  $\textbf{Fix}(\mathcal{R})=\bigcup\limits_{\alpha\subset\mathbf{I}_m}\{\mathbf{p}_\alpha\}$. It is obvious that $\bigcup\limits_{\alpha\subset\mathbf{I}_m}\{\mathbf{p}_\alpha\}\subset\textbf{Fix}(\mathcal{R})$. Let $\mathbf{x}\in\textbf{Fix}(\mathcal{R})$ be any fixed point. We set $\alpha:=\textup{supp}(\mathbf{x})$. It follows from \eqref{REdefinedbyRM} that
$$
\ell\left(\frac{x_k}{p_k}\right)=\sum\limits_{i\in\alpha}x_i\ell\left(\frac{x_i}{p_i}\right), \quad \forall \ k\in\alpha.
$$
This means that $\ell\left(\frac{x_{k_1}}{p_{k_1}}\right)=\ell\left(\frac{x_{k_2}}{p_{k_2}}\right)$	for any distinct $k_1,k_2\in\alpha$. Since $\ell$ is strictly increasing, we obtain $\frac{x_{k_1}}{p_{k_1}}=\frac{x_{k_2}}{p_{k_2}}$ for any $k_1,k_2\in\alpha$. Hence, we get $\mathbf{x}=\mathbf{p}_\alpha$. This shows $\textbf{Fix}(\mathcal{R})=\bigcup\limits_{\alpha\subset\mathbf{I}_m}\{\mathbf{p}_\alpha\}$.

$(ii)$ We now describe the set $\textbf{NE}(\mathcal{R})$ of all Nash equilibria of the discrete-time replicator equation. Namely, we now show $\textbf{NE}(\mathcal{R})=\{\mathbf{p}\}$. For the sake of convenience, we set  $\mathcal{E}_{\varepsilon\mathbf{\Phi}_m}(\mathbf{y},\mathbf{x}):=\langle\mathbf{y},\varepsilon\mathbf{\Phi}_m(\frac{\mathbf{x}}{\mathbf{p}})\rangle$ for all $\mathbf{x},\mathbf{y}\in\mathbb{S}^{m-1}$. For any $\mathbf{y}\in\mathbb{S}^{m-1}$, it is obvious that 
$$
\mathcal{E}_{\varepsilon\mathbf{\Phi}_m}(\mathbf{p},\mathbf{p})=\varepsilon\sum\limits_{i\in\mathbf{I}_m}p_i\ell\left(\frac{p_i}{p_i}\right)=\varepsilon \ell(1)= \varepsilon\sum\limits_{i\in\mathbf{I}_m}y_i\ell\left(\frac{p_i}{p_i}\right)=\mathcal{E}_{\varepsilon\mathbf{\Phi}_m}(\mathbf{y},\mathbf{p}),
$$
which means that $\mathbf{p}\in\textbf{NE}(\mathcal{R})$.

Let $\mathbf{x}\in\textbf{NE}(\mathcal{R})$ be any Nash equilibrium where $\alpha:=\textup{supp}(\mathbf{x})$. We then want to prove that $\alpha=\mathbf{I}_m$. We assume the contrary, i.e., $\alpha\neq \mathbf{I}_m$. Let $k\in\mathbf{I}_m\setminus\alpha$ (since $\mathbf{I}_m\setminus\alpha\neq\emptyset$). Since $-1<\varepsilon<0$  and $\mathbf{x}\in\textbf{NE}(\mathcal{R})$, we face the contradiction
$$
0=\mathcal{E}_{\varepsilon\mathbf{\Phi}_m}(\mathbf{e}_k,\mathbf{x})\leq\mathcal{E}_{\varepsilon\mathbf{\Phi}_m}(\mathbf{x},\mathbf{x})=\varepsilon\sum_{i\in\alpha} x_i\ell\left(\frac{x_i}{p_i}\right)<0.
$$

This shows $\alpha=\mathbf{I}_m$. In this case, we have for any $\mathbf{y}\in\mathbb{S}^{m-1}$ 
$$
\varepsilon \max\limits_{i\in\mathbf{I}_m}\ell\left(\frac{x_i}{p_i}\right)\leq\mathcal{E}_{\varepsilon\mathbf{\Phi}_m}(\mathbf{y},\mathbf{x})=\varepsilon\sum_{i\in\mathbf{I}_m} y_i\ell\left(\frac{x_i}{p_i}\right)\leq \varepsilon \min\limits_{i\in\mathbf{I}_m}\ell\left(\frac{x_i}{p_i}\right). 
$$	

Since $\mathbf{x}\in\textbf{NE}(\mathcal{R})$, we must have $$\mathcal{E}_{\varepsilon\mathbf{\Phi}_m}(\mathbf{x},\mathbf{x})=\max\limits_{\mathbf{y}\in\mathbb{S}^{m-1}}\mathcal{E}_{\varepsilon\mathbf{\Phi}_m}(\mathbf{y},\mathbf{x})=\varepsilon\min\limits_{i\in\mathbf{I}_m}\ell\left(\frac{x_i}{p_i}\right).$$

On the other hand, we have 
$$
\mathcal{E}_{\varepsilon\mathbf{\Phi}_m}(\mathbf{x},\mathbf{x})=\varepsilon\sum_{i\in\mathbf{I}_m} x_i\ell\left(\frac{x_i}{p_i}\right)\leq\varepsilon\min\limits_{i\in\mathbf{I}_m}\ell\left(\frac{x_i}{p_i}\right). 
$$

Consequently, we obtain  $\ell\left(\frac{x_k}{p_k}\right)=\min\limits_{i\in\mathbf{I}_m}\ell\left(\frac{x_i}{p_i}\right)$ for any $k\in\mathbf{I}_m$. Since $\ell$ is strictly increasing, we obtain that $\frac{x_{k_1}}{p_{k_1}}=\frac{x_{k_2}}{p_{k_2}}$ for any $k_1,k_2\in\mathbf{I}_m$ or equivalently $\mathbf{x}=\mathbf{p}$. This shows $\textbf{NE}(\mathcal{R})=\{\mathbf{p}\}$, i.e., the unique Wardrop optimal flow $\mathbf{p}$ is the only Nash equilibrium.

$(iii)$ We now show that the only stable fixed point is the Wardrop optimal flow $\mathbf{p}=(p_1,\cdots,p_m)$ which is also the only Nash equilibrium. 

On the one hand, according to Proposition~\ref{Dynamicsfornegative}, if  $|\alpha|<m$ for $\alpha\subset\mathbf{I}_m$ (which means $\alpha\neq\mathbf{I}_m$) then the fixed point $\mathbf{p}_\alpha$ of the face $\mathbb{S}^{|\alpha|-1}$ (which means different from $\mathbf{p}$) is not stable. Indeed, for any small neighborhood $U(\mathbf{p}_\alpha)\subset \mathbb{S}^{m-1}$ of the fixed point $\mathbf{p}_\alpha$ there is a point $\mathbf{x}\in U(\mathbf{p}_\alpha)\cap\textup{int}\mathbb{S}^{m-1}$  such that an orbit of the interior point $\mathbf{x}$ converges to the Wardrop optimal flow $\mathbf{p}$. Therefore, the fixed point $\mathbf{p}_\alpha$ of the face $\mathbb{S}^{|\alpha|-1}$ is not stable whenever $|\alpha|<m$ for any $\alpha\subset\mathbf{I}_m$  (which means $\alpha\neq\mathbf{I}_m$). 

On the other hand, according to Proposition~\ref{Dynamicsfornegative}, the Wardrop optimal flow $\mathbf{p}\in\textup{int}\mathbb{S}^{m-1}$ is stable.  Namely, for any ball $U_R(\mathbf{p})\Subset \textup{int}\mathbb{S}^{m-1}$ of radius $R$ there exists a (small) ball $U_r(\mathbf{p})\Subset \textup{int}\mathbb{S}^{m-1}$ of radius $r$ (we may choose $r=\frac{\bar{p}R}{m}$) such that $\|\mathcal{R}^{(n)}(\mathbf{x})-\mathbf{p}\|_1\leq R$ for any $n\in\mathbb{R}$ whenever $\|\mathbf{x}-\mathbf{p}\|_1\leq r$. 

Indeed, it follows from Proposition \ref{simplefact}  
$$
\bar{p}\left[\max\limits_{i\in\mathbf{I}_m}\frac{x_i}{p_i}-\min\limits_{i\in\mathbf{I}_m}\frac{x_i}{p_i}\right]\leq\|\mathbf{x}-\mathbf{p}\|_1\leq r
$$ 

Since $-1<\varepsilon<0$, for $k_1\in\mathsf{MinInd}\left(\frac{\mathbf{x}}{\mathbf{p}}\right)$ and $k_2\in\mathsf{MaxInd}\left(\frac{\mathbf{x}}{\mathbf{p}}\right)$ it follows from \eqref{REdefinedbyRM} that
\begin{eqnarray*}
	\frac{\left(\mathcal{R}(\mathbf{x})\right)_{k_1}}{p_{k_1}}=\frac{x_{k_1}}{p_{k_1}}\left[1-\varepsilon\sum\limits_{i=1}^mx_i\left(\ell\left(\frac{x_{i}}{p_{i}}\right)-\ell\left(\frac{x_{k_1}}{p_{k_1}}\right)\right)\right]\geq \frac{x_{k_1}}{p_{k_1}},\\
	\frac{\left(\mathcal{R}(\mathbf{x})\right)_{k_2}}{p_{k_2}}=\frac{x_{k_2}}{p_{k_2}}\left[1+\varepsilon\sum\limits_{i=1}^mx_i\left(\ell\left(\frac{x_{k_2}}{p_{k_2}}\right)-\ell\left(\frac{x_{i}}{p_{i}}\right)\right)\right]\leq \frac{x_{k_2}}{p_{k_2}}.
\end{eqnarray*}
Since 
$$
\mathsf{MinInd}\left(\frac{\mathcal{R}^{(n)}(\mathbf{x})}{\mathbf{p}}\right)=\mathsf{MinInd}\left(\frac{\mathbf{x}}{\mathbf{p}}\right) \quad  \textup{and} \quad	\mathsf{MaxInd}\left(\frac{\mathcal{R}^{(n)}(\mathbf{x})}{\mathbf{p}}\right)=\mathsf{MaxInd}\left(\frac{\mathbf{x}}{\mathbf{p}}\right),$$
by repeating this process, we obtain
\begin{eqnarray*}
	\min_{i\in\mathbf{I}_m}\frac{x_i}{p_i}\leq \cdots\leq\min_{i\in\mathbf{I}_m}\frac{(\mathcal{R}^{(n)}(\mathbf{x}))_i}{p_i}\leq \cdots \leq \max_{i\in\mathbf{I}_m}\frac{(\mathcal{R}^{(n)}(\mathbf{x}))_i}{p_i} \leq \cdots\leq \max_{i\in\mathbf{I}_m}\frac{x_i}{p_i}
\end{eqnarray*}
Hence, we have
\begin{eqnarray*}
	\left[\max_{i\in\mathbf{I}_m}\frac{(\mathcal{R}^{(n)}(\mathbf{x}))_i}{p_i}-\min_{i\in\mathbf{I}_m}\frac{(\mathcal{R}^{(n)}(\mathbf{x}))_i}{p_i}\right]\leq \cdots \leq \left[\max_{i\in\mathbf{I}_m}\frac{x_i}{p_i}-\min_{i\in\mathbf{I}_m}\frac{x_i}{p_i}\right]\leq \frac{r}{\bar{p}}.
\end{eqnarray*}

Consequently, it follows from Proposition \ref{simplefact} 
\begin{eqnarray*}
	\|\mathcal{R}^{(n)}(\mathbf{x})-\mathbf{p}\|_1\leq m \left[\max_{i\in\mathbf{I}_m}\frac{(\mathcal{R}^{(n)}(\mathbf{x}))_i}{p_i}-\min_{i\in\mathbf{I}_m}\frac{(\mathcal{R}^{(n)}(\mathbf{x}))_i}{p_i}\right]\leq m\cdot \frac{r}{\bar{p}}=R.
\end{eqnarray*}
for any $n\in\mathbb{R}$ whenever $\|\mathbf{x}-\mathbf{p}\|_1\leq r$. This means that the Wardrop optimal flow $\mathbf{p}=(p_1,\cdots,p_m)$ is the stable fixed point. 

$(iv)$ Due to Proposition~\ref{Dynamicsfornegative}, any interior orbit starting from any interior initial point $\mathbf{x}\in \textup{int}\mathbb{S}^{m-1}$ converges to the Wardrop optimal flow $\mathbf{p}\in\textup{int}\mathbb{S}^{m-1}$. This completes the proof of Theorem~\ref{DynamicsofWOF}.

\section{Conclusion}

A part of the brain neural network that relies on chemical synapses for communication can be regarded as a \textit{transportation network}. The primary objective of this paper was to gain insight into specific mechanisms and phenomena within the neural network of the brain by applying methods originally designed for solving traffic assignment problems in transportation networks. 
In this regard, the \textit{impairment}, \textit{deterioration}, \textit{degradation}, or \textit{decline} of the functionality of synapses, occurring due to elevated levels of toxic neuropeptides/neurotransmitters during the neurotransmission process, can be linked to a \textit{traffic congestion} problem in a transportation network. Therefore, when investigating the dynamic relationship between synaptic damage and the concentration of toxic neuropeptides/neurotransmitters during the neurotransmission process, it is plausible to adapt ``\textit{Wardrop's first and second principles}''. Although these principles were originally formulated to describe \textit{optimal flow} distributions in transportation networks, we believed that they could be applied effectively within the framework of the neural network in the brain. This was a novel approach adopted in this paper.

For the sake of simplicity, our primary focus was on the investigation of the signal transmission process between two neurons through chemical synapses. The connection between two neurons, established through unidirectional synaptic links, was visualized as \textit{a parallel network between two nodes}. Within the framework of the neurotransmission process, increased levels of toxic neuropeptides/neurotransmitters damage the synaptic links. To investigate this correlation, we made an assumption that each synaptic link is associated with a \textit{synaptic damage function}. This synaptic damage function depends on a concentration of toxic neuropeptides/neurotransmitters on a synaptic link that measures its damage. In this context, we presented a complete manifestation of ``\textit{Wardrop's first and second principles}'' within a neural network of the brain. This was achieved by introducing two novel concepts: \textit{neuropeptide's (neurotransmitter's) equilibrium} and \textit{synapses optimum}. By definition, the \textit{neuropeptide/neurotransmitter equilibrium} refers to \textit{a distribution of one-unit neuropeptides/neurotransmitters that leads to uniform damage across all synaptic links}. Obviously, a such kind of distributions that leads to uniform damage across all synaptic links does not generally minimize the cumulative damage experienced by all synapses. The primary objective of experts in the domains of synapses and neurotransmitters of Alzheimer's disease is \textit{to reduce and minimize  the cumulative damage experienced by all synapses in the brains of individuals afflicted with Alzheimer's disease}. Therefore, a distribution of one-unit neuropeptides/neurotransmitters is called the \textit{optimal state of synapses} (\textit{synapses optimum}) if it minimizes the cumulative damage experienced by all synapses.  Consequently, within the context of a neural network of the brain, an analogous concept  to \textit{the price of anarchy} can be called as \textit{the price of cognition}. This represents \textit{the most unfavorable ratio between the overall impairment caused by toxic neuropeptide's (neurotransmitter's) equilibrium in comparison to the optimal state of synapses (synapses optimum)}. To put it differently, \textit{the price of cognition} measures \textit{the loss of cognitive ability resulting from increased concentrations of toxic neuropeptides/neurotransmitters}.  Finally, a replicator equation was proposed within this framework that leads to the establishment of the synapses optimum during the neurotransmission process.  Our model represents a high-level simplification and abstraction of the natural neurotransmission process involving two neurons. Nevertheless, we envisioned that this mathematically abstract model could serve as a source of motivation to instigate novel experimental, mathematical, and computational research avenues in the field of contemporary neuroscience.

\section*{Declaration of competing interest}
The authors declare that they have no known competing financial interests or personal relationships that could have appeared to influence the work reported in this paper.

\section*{Declaration of data availability statement}
The authors declare that data sharing is not applicable to this article as no new data were created or analyzed in this study.

\end{document}